\newtheorem{definition}{Definition}
\newtheorem{theorem}[definition]{Theorem}
\newtheorem{proposition}[definition]{Proposition}
\newtheorem{lemma}[definition]{Lemma}
\newtheorem{corollary}{Corollary}
\newtheorem{remark}[definition]{Remark}
\newcommand{\X}[1]{(X_{#1})}
\newcommand{\qsp}[2]{\,\ensuremath{\raise.5ex\hbox{$#1$}\big\slash\raise-.5ex\hbox{$#2$}}} 
\newcommand{\pard}[2]{\frac{\delta#1}{\delta#2}}
\newcommand{\txi}[1]{\widetilde{\xi}^{#1}}
\newcommand{\intl}{\int\limits}
\newcommand{\tc}{\widetilde{c}}
\newcommand{\tom}{\widetilde{\omega}}
\newcommand{\bom}{\boldsymbol{\omega}}
\newcommand{\te}{\widetilde{e}}
\newcommand{\bem}{\mathbf{e}}
\newcommand{\be}{\mathbf{E}}
\newcommand{\tedl}{\widetilde{\underline{e}}^\dag}
\newcommand{\bedl}{\mathbf{e}^\dag}
\newcommand{\bxi}{\boldsymbol{\xi}}
\newcommand{\bc}{\mathbf{c}}
\newcommand{\bod}{\boldsymbol{\omega}^\dag}
\newcommand{\Wedge}[1]{{\textstyle \bigwedge^{#1}}}
\title[BV-BFV GR: Palatini--Cartan--Holst Action]{BV-BFV approach to General Relativity: Palatini--Cartan--Holst action}
\author{A. S. Cattaneo}
\address{Institut f\"ur Mathematik, Winterthurerstrasse 190, 8057 Z\"urich, Switzerland}
\email{cattaneo@math.uzh.ch}
\author{M. Schiavina}
\address{Department of Mathematics, University of California, Berkeley, 970 Evans Hall, Berkeley, 94720 Berkeley, U.S.A.}
\curraddr{Institute for Theoretical Physics, ETH Z\"urich, Wolfgang Pauli Strasse 27, 8093 Z\"urich, Switzerland.}
\curraddr{Department of Mathematics, ETH Z\"urich, R\"amistrasse 121, 8092 Z\"urich, Switzerland.}
\email{micschia@phys.ethz.ch}
\date{}
\thanks{This research was (partly) supported by the NCCR SwissMAP, funded by the Swiss National Science Foundation and by the COST Action MP1405 QSPACE, supported by COST (European Cooperation in Science and Technology).
A. S. C. acknowledges partial support of SNF grant No. 200020\_172498$\slash$1. M. S. is supported by SNF grant No. P2ZHP2\_164999.
}
\begin{document}

\maketitle

\begin{abstract}
We show that the Palatini--Cartan--Holst formulation of General Relativity in tetrad variables must be complemented with additional requirements on the fields when boundaries are taken into account for the associated BV theory to induce a compatible BFV theory on the boundary. 
\end{abstract}

\tableofcontents

\section*{Introduction}
This paper deals with the BV-BFV approach to General Relativity (GR), initiated in \cite{CS1} and \cite{THESIS}. It is devoted to the Palatini--Cartan--Holst formulation of GR, following the classical analysis of \cite{CS4}. 

Elaborating on the ideas of Batalin, Fradkin and Vilkovisky (B(F)V) \cite{BV81,BFV1,BFV2}, Cattaneo, Mn\"ev and Reshetikhin (CMR) \cite{CMR1,CMR2,CMR3} suggested that in order to make sense of perturbative quantisation for a gauge theory on a manifold with boundary, suitable compatibility conditions should hold between a Lagrangian theory in the bulk manifold $M$ and its relative Hamiltonian description on the boundary $\partial M$.

At the quantum level, the mentioned compatibility is required to ensure that the state associated with the quantisation of the bulk be physical (i.e. gauge invariant), and this is formalised by requiring that it be a cocycle in a suitable complex induced by the boundary structure. At the semi-classical level, instead, one requires that the bulk action fail to be the Hamiltonian function of the BV-operator, by a term controlled by the boundary Noether one-form. One then goes on by assuming that the differential of such a boundary form be pre-symplectic (i.e. its kernel must be a subbundle), and that symplectic reduction is smooth.

When this holds, some cohomological data (a BFV-manifold) is induced on the boundary from the BV-data in the bulk, and both the algebra of constraints in the geometrical sense of Kijowski and Tulczyjew \cite{KT} and the residual gauge symmetry on the boundary are recovered from the induced \emph{boundary action} (the BFV operator). More precisely, what happens is that the Chevalley--Eilenberg--Koszul--Tate resolution of (on-shell) gauge-equivalent classes of fields, represented by the BV-data in the bulk \cite{BFV1,BFV2} (see also \cite{Stash}), gets surjectively mapped to the resolution of the (reduced) coisotropic submanifold of canonical constraints on the boundary (the BFV data \cite{Schaetz09,Schaetz10,stash97}). This produces a cohomological resolution of the canonical constraints, with the crucial property of being compatible with the original theory in the bulk. In most cases, this requires no further input than the BV-data one assigns to the bulk.

To this aim, the symplectic analysis of \cite{KT} has the great advantage, with respect to the widely employed Dirac analysis of constraints \cite{Dirac, Henn}, of being clean and of yielding a direct access to geometric or canonical quantisation and to the BV-BFV construction. 

This approach also goes in the direction of the axiomatisation of quantum field theory \cite{Ati,Seg}: the BV-BFV formalism admits a natural cutting-gluing procedure that might allow one to understand the quantum theory on elementary building blocks, to be then glued together to obtain the quantisation of a more complex space--time manifold.

After having analysed the Einstein-Hilbert formulation of General Relativity \cite{CS1}, and having shown that it does indeed satisfy the (classical) BV-BFV axioms, we now turn to another \emph{classically equivalent} formulation of GR. 

In Palatini--Cartan--Holst theory (PCH) the basic fields are a tetrad (a co-frame field) and a connection in an $SO(3,1)$ bundle\footnote{In the Euclidean case one uses $SO(4)$.}. The equivalence between EH and PCH theories is well established for closed manifolds, and its extension to the boundary is discussed in \cite{CS4}.

In Section \ref{Sect:BVPCH} we implement diffeomorphisms as gauge symmetries in the BV setting, for all theories of differential forms valued in $\mathfrak{g}$-modules, and we use it to define the BV-structure associated to PCH theory.

We find that this (natural) extension of PCH theory to the BV setting does not satisfy the BV-BFV axioms, as it does not induce a bulk-compatible BFV structure on the boundary, unless non-trivial strong requirements are imposed on the fields. We stress that this is a remarkable deviation from the Einstein--Hilbert case.

Recently, it has been shown \cite{CSS, CanSch} that the analogous theory in $(2+1)$-dimensions does not encounter such an obstruction, when inducing BFV data associated to the boundary, which then appears to be a phenomenon peculiar to dimension 4 (and possibly higher).

We plan to investigate possible solutions to this issue, e.g., by correcting the BV-form by a boundary term, a strategy that turned out successful in a similar, yet much simpler situation, when dealing with one dimensional gravity coupled to matter \cite{CS3}.

However, we argue that the usual notion of classical equivalence of field theories is insufficient to grasp differences that might arise where higher codimension data (e.g. boundaries) are taken into account. In the mentioned case of a one dimensional gravity model, a theory that is classically equivalent to the Jacobi formulation of classical mechanics is shown to enjoy a much better boundary structure than the latter, which induces a BV-BFV structure only with a careful choice of a boundary term for the BV-form \cite{CS3}.

Another possibility is presented in Section \ref{Sect:pres}, where we will replace the natural assignment of symmetries in favour of vector fields which preserve the boundary submanifold. This choice (tantamount to requiring that the vector fields have zero transverse component on the boundary) will turn out to have a great impact in the regularity of the theory. Theorems \ref{Theo:PCHpres1} and \ref{Theo:PCHpres2} will state the existence of a BV-BFV correspondence when the new BV input is considered. 

This strategy is also related to what happens in the one-dimensional examples of \cite{CS3}. In this case, though, the BFV data we obtain by considering boundary-preserving diffeomorphism will be the resolution of a coisotropic submanifold, larger than the one that defines GR.

This result poses an important question about what variational principles that describe the same Euler--Lagrange equations should be considered truly equivalent in the presence of boundary. The BV-BFV axioms might then be used as a criterion to determine whether a given variational principle has better chances than others to yield a sensible quantisation theory, if we believe that whatever quantisation eventually turns out to be, it should essentially be represented by a functorial association of a suitable category of linear objects, to the category of space--time cobordisms with structure. 

In other words, the naturality of the requirement of a bulk theory to be compatible with its boundary data, makes it hard to think that a correct notion of quantisation can be developed without taking this requirement into account.

\section{Classical BV and BFV formalisms}\label{section:background}

In this section we recall the general formalism for gravity theories, as in section 2 of \cite{CS1}.

Consider a \emph{space of fields}, i.e., a (possibly infinite dimensional) $\mathbb{Z}$-graded symplectic manifold $\mathcal{F}$ with a symplectic form $\Omega$ of degree $|\Omega|=k$ together with a local, degree $k+1$ functional $S$ of the fields and a finite number of their derivatives. 

The dynamical content of the theory is encoded in the Euler--Lagrange variational problem for the functional $S$. The $\mathbb{Z}$-grading is called \emph{ghost number}, but it will be often replaced by the computationally friendly \emph{total degree}, which takes into account the sum of different gradings when the fields belong to some graded vector space themselves (e.g. differential forms).

The symmetries are encoded by an odd vector field $Q\in \Gamma(T[1]F)$ such that $[Q,Q]=0$. A vector field with such a property is said to be  \emph{cohomological}.

Among these pieces of data some compatibility conditions are required. We give the following definitions for different values of $k$. According to the convention that we adopt, ordinary symplectic manifolds are called $(0)$-symplectic in the graded setting. Our model for a bulk theory will be given by 
\begin{definition}\label{BVdef}
A BV-manifold is the collection of data $(\mathcal{F}, S, Q,  \Omega)$ with $(\mathcal{F},\Omega)$ a $\mathbb{Z}$-graded $(-1)$-symplectic manifold, and $S$ and $Q$ respectively a degree $0$ function and a degree $1$ vector field on $\mathcal{F}$ such that
\begin{enumerate}
\item $\iota_{Q}\Omega=\delta S$, i.e. $S$ is the Hamiltonian function of $Q$
\item $[Q,Q]=0$, i.e. $Q$ is cohomological.
\end{enumerate}
\end{definition}

\begin{remark}
The symplectic structure $\Omega$ defines an odd-Poisson bracket $(,)$ on $\mathcal{F}$ and the above conditions together imply
\begin{equation}
(S,S)=0
\end{equation}
the \emph{Classical Master Equation} (CME).
\end{remark}

\begin{definition}
Whenever the data $(\mathcal{F},S,Q,\Omega)$ satisfies only (2) but not (1) we say that the BV-manifold is \emph{broken}\footnote{Sometimes one requires that $\Omega$ be only closed, allowing it to be degenerate. In this case one speaks of $(\mathcal{F}, \Omega, S, Q)$ as a \emph{relaxed} BV-manifold.}. 
\end{definition}

On the other hand, the model for a boundary theory, \emph{induced} in some sense to be explained, will be given by
\begin{definition}\label{BFVdef}
A BFV-manifold is the collection of data $(\mathcal{F}^\partial, S^\partial, Q^\partial,  \omega^\partial)$ with $(\mathcal{F}^\partial,\omega^\partial)$ a $\mathbb{Z}$-graded $0$-symplectic manifold, and $S^\partial$ and $Q^\partial$ respectively a degree $1$ function and a degree $1$ vector field on $\mathcal{F}^\partial$ such that
\begin{enumerate}
\item $\iota_{Q^\partial}\omega^\partial=\delta S^\partial$, i.e. $S^\partial$ is the Hamiltonian function of $Q^\partial$
\item $[Q^\partial,Q^\partial]=0$, i.e. $Q^\partial$ is cohomological.
\end{enumerate}
This implies that $S^\partial$ satisfies the CME. If $\omega^\partial$ is exact, we will say that the BFV-manifold is exact.
\end{definition}

These definitions abstract from the following prototype. Usually one starts from a classical theory, that is, for each manifold M of a fixed dimension, the assignment of a local action functional $S_{M}^0$ on some space of classical fields $F_M$ and a distribution in the bulk $D\subset TF_M$ encoding the symmetries, i.e. $L_X(S_{M}^0)=0$ for all $X\in \Gamma(D)$. The only requirement on $D$ for the formalism to make sense is that $D$ be involutive on the critical locus of $S_{M}^0$. Notice that $D$ can be the distribution induced by a Lie algebra (group) action, in which case it is involutive on the whole space of fields. When this is the case we will talk of BRST formalism, even though the setting will be slightly different from the original one (for another account on the relationship between the BV and BRST formalism see, e.g. \cite{PavelTh}). 

To construct a BV-manifold on a closed manifold $M$ starting from classical data we must first extend the space of fields to accommodate the symmetries: $F_M\leadsto \mathcal{F}_M=T^*[-1]D[1]$. Symmetries are considered with a degree shift of $+1$, whereas the dualisation introduces a different class of fields (called anti-fields) with opposite parity to their conjugate fields, owing to the $-1$ shift in the cotangent functor. This yields a $(-1)$-symplectic manifold, which is a good candidate to be the space of fields we want to work with\footnote{Here we assume for simplicity that $D$ can also be described in terms of local data. In more general situations, one may have to resolve $D$ into a complex described in terms of local data (ghost for ghosts).}.

The classical action has to be extended as well to a new local functional on $\mathcal{F}_M$, and if we want this to satisfy the axioms of the BV-manifold we must impose the CME on the extended action. This process of extension goes through co-homological perturbation theory \cite{BV81, Stash, stash97,FK, CMR2} and it will ensure us to end up with a BV-manifold. However, for a theory which is BRST-like, the extension is determined by the following straightforward result \cite{BV81}:
\begin{theorem}\label{minimalBV}
If $D$ comes from a Lie algebra action, the functional $S_{BV}=S_{M}^0 + \langle \Phi^\dag,Q\Phi\rangle$ on the space of fields $\mathcal{F}_M=T^*[-1]D[1]$ satisfies the CME, where $\Phi$ is a multiplet of fields in $D[1]$, $\Phi^\dag$ denotes the corresponding multiplet of conjugate (anti-)fields and $Q$ is the degree $1$ vector field encoding the symmetries of $D$. 

$\mathcal{F}_M$ is then a $(-1)$-symplectic manifold and, together with $S_{BV}$ and $Q_{BV}$ satisfying $\iota_{Q_{BV}}\Omega = \delta S$, with $\Omega$ the standard symplectic structure on cotangent bundles, it yields a BV-manifold corresponding to a (minimal) extension of the classical theory.
\end{theorem}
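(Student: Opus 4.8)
The plan is to verify directly the Classical Master Equation $(S_{BV},S_{BV})=0$ for the odd Poisson bracket $(\cdot,\cdot)$ associated with the canonical $(-1)$-symplectic structure on $\mathcal{F}_M=T^*[-1]D[1]$, and then to read off the two conditions of Definition \ref{BVdef}. First I would fix notation: the multiplet $\Phi=(\phi,c)$ consists of the classical fields $\phi$ (degree $0$) and the ghosts $c\in\mathfrak{g}[1]$ (degree $1$), while $\Phi^\dag=(\phi^\dag,c^\dag)$ are the fibre coordinates introduced by $T^*[-1]$, of degrees $-1$ and $-2$. The cohomological vector field acts as the combination of the infinitesimal action and the Chevalley--Eilenberg differential, $Q\phi=\rho(c)\phi$ and $Qc=-\tfrac12[c,c]$, where $\rho\colon\mathfrak{g}\to\Gamma(TF_M)$ is the Lie algebra homomorphism generating $D$. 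The canonical pairing $\langle\delta\Phi^\dag,\delta\Phi\rangle$ integrated over $M$ gives $\Omega$, which is manifestly $(-1)$-symplectic, and one checks that $\langle\Phi^\dag,Q\Phi\rangle$ has total degree $0$, so that $S_{BV}$ is an admissible degree-$0$ functional.

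Next I would split $S_{BV}=S_M^0+\Psi$ with $\Psi=\langle\Phi^\dag,Q\Phi\rangle$ and expand
\[
(S_{BV},S_{BV})=(S_M^0,S_M^0)+2(S_M^0,\Psi)+(\Psi,\Psi).
\]
The first term vanishes because $S_M^0$ is independent of the antifields, so all its antifield derivatives are zero. In the cross term only the $\phi$--$\phi^\dag$ conjugate pair contributes, again because $S_M^0$ depends on $\phi$ alone; the bracket then reproduces $\int_M \tfrac{\delta S_M^0}{\delta\phi}\,\rho(c)\phi=L_{\rho(c)}S_M^0$, which is exactly the gauge invariance $L_XS_M^0=0$ for $X\in\Gamma(D)$. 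This is the step where the hypothesis that $D$ integrates a genuine Lie algebra action is essential: invariance holds off-shell, with no higher-order corrections to $S_{BV}$ required.

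It remains to show $(\Psi,\Psi)=0$. I would compute the two functional derivatives $\delta\Psi/\delta\Phi^\dag=Q\Phi$ and $\delta\Psi/\delta\Phi$ (the latter being, after integration by parts, the formal adjoint of $Q$ acting on $\Phi^\dag$), and check that with their Koszul signs they assemble into $\langle\Phi^\dag,Q^2\Phi\rangle$. This vanishes because $Q$ squares to zero: on ghosts $Q^2c=\tfrac12[[c,c],c]=0$ by the Jacobi identity, and on classical fields $Q^2\phi=0$ because $\rho$ is a homomorphism, the $\rho(Qc)\phi$ contribution and the $\rho(c)\rho(c)\phi$ contribution cancelling. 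Thus the CME holds. Finally, condition (1) of Definition \ref{BVdef} is obtained by verifying that the Hamiltonian vector field $X$ determined by $\iota_X\Omega=\delta S_{BV}$ agrees with the prescribed $Q$ on $\phi$ and $c$ and defines its components on the antifields; since $\Omega$ is non-degenerate, the CME just established is equivalent to $[Q,Q]=0$, which is condition (2), so $(\mathcal{F}_M,S_{BV},Q,\Omega)$ is a BV-manifold.

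The main obstacle I anticipate is purely bookkeeping: keeping track of the Koszul signs coming from the $\mathbb{Z}$-grading in the odd bracket and in the formal adjoint of $Q$, and making sure the integration by parts hidden in $\delta\Psi/\delta\Phi$ correctly reassembles $\langle\Phi^\dag,Q^2\Phi\rangle$ rather than some sign-twisted variant. Everything else reduces to the two structural inputs isolated above --- off-shell gauge invariance of $S_M^0$ and the $Q^2=0$ identity carried by the Lie algebra --- so once the signs are pinned down the verification is mechanical.
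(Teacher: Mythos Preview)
Your proof is correct and is precisely the standard argument for this well-known result. Note, however, that the paper does not actually supply a proof of this theorem: it is stated with a reference to \cite{BV81} and used as background input for the subsequent constructions. So there is no ``paper's proof'' to compare against; your sketch is the expected verification, and the only delicate points are indeed the Koszul signs and the factor of two in $(\Psi,\Psi)$, which you have correctly flagged.
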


\subsection{BV-BFV formalism for gauge theories}\label{b(f)vgauge}

We will explain here in which sense Definition \ref{BFVdef} is a boundary model for Definition \ref{BVdef}. 

\begin{definition}\label{BV-BFV}
An exact BV-BFV pair is the quintuple $(\mathcal{F}, \Omega, S, Q; \pi)$ given by a broken BV-manifold, together with the exact BFV-manifold $(\mathcal{F}^\partial,\omega^\partial=\delta \alpha^\partial, S^\partial, Q^\partial)$ and a surjective submersion $\pi\colon \mathcal{F}\longrightarrow \mathcal{F}^\partial$ such that the BV-BFV formula:
\begin{equation}
\iota_Q\Omega=\delta S + {\pi}^*{\alpha}^\partial
\end{equation}
is satisfied. Such a pair will be denoted by $(\mathcal{F},\mathcal{F}^\partial)_{\pi}$
\end{definition}

In field theory there are natural examples of BV-BFV pairs as in the following prototypical construction. Say that we start from the data defining a BV-manifold, but this time we allow $M$ to have a boundary $\partial M$: the requirement that $\iota_{Q}\Omega=\delta S$ is (in general) no longer true. What will happen is that the integration by parts one usually has to take into account when computing $\delta S$ will leave some non zero terms \emph{on the boundary}. More precisely, consider the map 
\begin{equation}
\widetilde{\pi}_M:\mathcal{F}_M \longrightarrow \widetilde{\mathcal{F}}_{\partial M}
\end{equation}
that takes all fields and their jets to their restrictions to the boundary  (it is a surjective submersion). We can interpret the boundary terms as the pullback of a one form $\widetilde{\alpha}$ on $\widetilde{\mathcal{F}}$, namely
\begin{equation}\label{noether}
\iota_{Q}\Omega=\delta S + \widetilde{\pi}_M^*\widetilde{\alpha}
\end{equation}
We will call $\widetilde{\alpha}$ the \emph{pre-boundary} one form. In full generality $\widetilde{\alpha}$ is a connection on a line bundle, yet when $S$ is a function on the space of fields, $\widetilde{\alpha}$ is a globally well defined $1$-form.

Notice that if we are given this data, we can interpret this as a \emph{broken BV-manifold}, with some relation to the boundary. We can in fact consider the pre-boundary two form $\widetilde{\omega}\coloneqq \delta\widetilde{\alpha}$ and if it is pre-symplectic (i.e. its kernel is a subbundle) then we can define the true space of boundary fields $\mathcal{F}^\partial_{\partial M}$ to be the symplectic reduction of the space of pre-boundary fields, namely:
\begin{equation}\label{smoothred}
\mathcal{F}_{\partial M}^\partial=\qsp{\widetilde{\mathcal{F}}_{\partial M}}{\mathrm{ker}(\widetilde{\omega})}
\end{equation}
with projection to the quotient denoted by $\varpi:\widetilde{\mathcal{F}}_{\partial M}\longrightarrow \mathcal{F}_{\partial M}^\partial$. If all of the above assumptions are satisfied and the quotient $\mathcal{F}^\partial_{\partial M}$ is smooth, the map $\pi_M\coloneqq \varpi\circ\widetilde{\pi}$ is a surjective submersion, the reduced two form $\omega^\partial\coloneqq \underline{\widetilde{\omega}}$ is a $0$-symplectic form, and the key result is

\begin{proposition}[\cite{CMRCorfu}]\label{CMRprop}
The cohomological vector field $Q$ projects to a cohomological vector field $Q^\partial$ on the space of boundary fields $\mathcal{F}_{\partial M}^\partial$. Moreover $Q^\partial$ is Hamiltonian for a function $S^\partial$, the boundary action.
\end{proposition}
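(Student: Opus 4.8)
The plan is to construct $Q^\partial$ and $S^\partial$ in two stages: first push the bulk data forward along the restriction map $\widetilde{\pi}_M$ to the pre-boundary space $\widetilde{\mathcal{F}}_{\partial M}$, and then descend through the symplectic reduction $\varpi$ to $\mathcal{F}^\partial_{\partial M}$. The starting observation is that $Q$ is $\widetilde{\pi}_M$-projectable: since $Q$ is a \emph{local} (evolutionary) vector field, its action commutes with the operation of sending fields and their jets to their boundary restrictions, so there is a unique vector field $\widetilde{Q}$ on $\widetilde{\mathcal{F}}_{\partial M}$ that is $\widetilde{\pi}_M$-related to $Q$, i.e. $d\widetilde{\pi}_M\circ Q=\widetilde{Q}\circ\widetilde{\pi}_M$.

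Next I would show that $\widetilde{Q}$ preserves the pre-boundary two-form, $L_{\widetilde{Q}}\widetilde{\omega}=0$. Applying $\delta$ to the defining relation \eqref{noether} and using $\delta\Omega=0$ gives $L_Q\Omega=-\delta\iota_Q\Omega=-\widetilde{\pi}_M^*\widetilde{\omega}$ (the overall sign is irrelevant). Applying $L_Q$ once more and using that $Q$ and $\widetilde{Q}$ are $\widetilde{\pi}_M$-related yields
\begin{equation}
\widetilde{\pi}_M^* L_{\widetilde{Q}}\widetilde{\omega}=-L_Q L_Q\Omega=-\tfrac12 L_{[Q,Q]}\Omega=0,
\end{equation}
where I used that $X\mapsto L_X$ satisfies $[L_X,L_Y]=L_{[X,Y]}$ together with $[Q,Q]=0$. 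As $\widetilde{\pi}_M$ is a surjective submersion its pullback is injective, so $L_{\widetilde{Q}}\widetilde{\omega}=0$. Because $\ker(\widetilde{\omega})$ is the characteristic distribution intrinsically attached to $\widetilde{\omega}$, invariance forces $[\widetilde{Q},\Gamma(\ker\widetilde{\omega})]\subseteq\Gamma(\ker\widetilde{\omega})$: for $Z\in\Gamma(\ker\widetilde{\omega})$ one has $\iota_{[\widetilde{Q},Z]}\widetilde{\omega}=L_{\widetilde{Q}}\iota_Z\widetilde{\omega}-\iota_Z L_{\widetilde{Q}}\widetilde{\omega}=0$. Hence $\widetilde{Q}$ is projectable along $\varpi$ and defines a vector field $Q^\partial$ on $\mathcal{F}^\partial_{\partial M}$, where I use the assumed smoothness of the reduction \eqref{smoothred}. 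Since $Q$ is then $\pi_M$-related to $Q^\partial$ and brackets of related vector fields are related, $[Q^\partial,Q^\partial]$ is $\pi_M$-related to $[Q,Q]=0$; as $\pi_M$ is a surjective submersion this forces $[Q^\partial,Q^\partial]=0$, so $Q^\partial$ is cohomological.

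For the Hamiltonian property I would first produce a Hamiltonian $\widetilde{S}^\partial$ for $\widetilde{Q}$ upstairs and then show it descends. From $L_{\widetilde{Q}}\widetilde{\omega}=0$ and $\delta\widetilde{\omega}=0$ the one-form $\iota_{\widetilde{Q}}\widetilde{\omega}$ is $\delta$-closed. Using $\widetilde{\omega}=\delta\widetilde{\alpha}$ I would write $\iota_{\widetilde{Q}}\widetilde{\omega}=L_{\widetilde{Q}}\widetilde{\alpha}+\delta(\iota_{\widetilde{Q}}\widetilde{\alpha})$, so that any primitive of the (closed) one-form $L_{\widetilde{Q}}\widetilde{\alpha}$ furnishes a function $\widetilde{S}^\partial$ with $\iota_{\widetilde{Q}}\widetilde{\omega}=\delta\widetilde{S}^\partial$. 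Such a primitive descends automatically: for $Z\in\Gamma(\ker\widetilde{\omega})$ one computes $L_Z\widetilde{S}^\partial=\iota_Z\delta\widetilde{S}^\partial=\iota_Z\iota_{\widetilde{Q}}\widetilde{\omega}=\pm\iota_{\widetilde{Q}}\iota_Z\widetilde{\omega}=0$, so $\widetilde{S}^\partial$ is constant along the fibres of $\varpi$ and equals $\varpi^*S^\partial$ for a well-defined degree-$1$ function $S^\partial$ on $\mathcal{F}^\partial_{\partial M}$. Since $\widetilde{\omega}=\varpi^*\omega^\partial$ and $\widetilde{Q}$ is $\varpi$-related to $Q^\partial$, one has $\iota_{\widetilde{Q}}\widetilde{\omega}=\varpi^*\iota_{Q^\partial}\omega^\partial$; comparing with $\delta\widetilde{S}^\partial=\varpi^*\delta S^\partial$ and using injectivity of $\varpi^*$ gives exactly $\iota_{Q^\partial}\omega^\partial=\delta S^\partial$ on $\mathcal{F}^\partial_{\partial M}$.

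I expect the genuine obstacle to be the \emph{existence} of the primitive $\widetilde{S}^\partial$, i.e. showing that the $\delta$-closed one-form $\iota_{\widetilde{Q}}\widetilde{\omega}$ (equivalently $L_{\widetilde{Q}}\widetilde{\alpha}$) is in fact $\delta$-\emph{exact} globally on $\widetilde{\mathcal{F}}_{\partial M}$. In the field-theoretic models at hand this is typically settled either because the spaces of boundary fields are modelled on (graded) vector spaces with trivial $\delta$-cohomology in positive form-degree, or by exhibiting $\widetilde{S}^\partial$ through a direct local computation. The projectability and cohomological statements, by contrast, are formal consequences of locality and of $[Q,Q]=0$, and the descent of the Hamiltonian is automatic once the reduction \eqref{smoothred} is smooth, which is an assumption already built into the hypotheses.
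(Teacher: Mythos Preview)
The paper does not supply a proof of this proposition; it simply cites the result from \cite{CMRCorfu}. So there is no argument in the paper to compare against directly. Your projectability and descent arguments are correct and standard: locality gives $\widetilde{Q}$, the identity $L_QL_Q\Omega=\tfrac12 L_{[Q,Q]}\Omega=0$ yields $L_{\widetilde{Q}}\widetilde{\omega}=0$, and this forces $\widetilde{Q}$ to preserve $\ker\widetilde{\omega}$ and hence to descend to a cohomological $Q^\partial$.

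The one place where you leave a genuine gap is the one you flag yourself: the existence of the primitive $\widetilde{S}^\partial$ for the closed one-form $\iota_{\widetilde{Q}}\widetilde{\omega}$. You suggest this is resolved either by a cohomology-vanishing argument or by explicit computation, but there is a clean general mechanism you have missed, and it is precisely the one the paper invokes later (see the proof of Theorem~\ref{Theo:PCHpres2}, following Roytenberg \cite{Royt}). Let $\widetilde{E}$ be the Euler vector field for the ghost-number grading on $\widetilde{\mathcal{F}}_{\partial M}$. Since $\widetilde{\omega}$ has ghost degree $0$ one has $L_{\widetilde{E}}\widetilde{\omega}=0$, while $\iota_{\widetilde{Q}}\widetilde{\omega}$ has ghost degree $1$, so $L_{\widetilde{E}}(\iota_{\widetilde{Q}}\widetilde{\omega})=\iota_{\widetilde{Q}}\widetilde{\omega}$. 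Combining this with Cartan's formula and the closedness you already established gives
\[
\iota_{\widetilde{Q}}\widetilde{\omega}
= L_{\widetilde{E}}\,\iota_{\widetilde{Q}}\widetilde{\omega}
= \iota_{\widetilde{E}}\,\delta\,\iota_{\widetilde{Q}}\widetilde{\omega}
+ \delta\,\iota_{\widetilde{E}}\,\iota_{\widetilde{Q}}\widetilde{\omega}
= \delta\bigl(\iota_{\widetilde{E}}\iota_{\widetilde{Q}}\widetilde{\omega}\bigr),
\]
so $\widetilde{S}^\partial\coloneqq\iota_{\widetilde{E}}\iota_{\widetilde{Q}}\widetilde{\omega}=\iota_{\widetilde{Q}}\iota_{\widetilde{E}}\widetilde{\omega}$ is an explicit, globally defined primitive. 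This is not merely an existence trick: it is the formula the paper actually uses to compute the boundary action in concrete examples, and it relies essentially on the fact that the target BFV symplectic form has ghost degree $0$. With this in hand your descent argument for $S^\partial$ goes through unchanged.
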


When this construction goes through, it associates to a manifold with boundary $(M,\partial M)$ a BV-BFV pair that depends on the manifold data. We will say that

\begin{definition}
A d-dimensional BV-BFV theory is an association of a BV-BFV pair $(\mathcal{F}_M,\mathcal{F}_{\partial M}^\partial)_{\pi_M}$ to a d-dimensional manifold with boundary $(M,\partial M)$.
\end{definition}

To summarise the construction and rephrase Proposition \ref{CMRprop} we have the following

\begin{theorem}[\cite{CMR1}]
Whenever the space $\mathcal{F}^\partial_{\partial M}$ of Equation \eqref{smoothred} is smooth, we are given the BV-BFV pair $(\mathcal{F}_M,\mathcal{F}_{\partial M}^\partial)_{\pi_M}$. The construction of a BV-manifold for a local field theory on a closed manifold $M$ extends to a (possibly exact) BV-BFV theory on the manifold with boundary $(M,\partial M)$.
\end{theorem}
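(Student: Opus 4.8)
The plan is to read the statement as the packaging of the prototypical construction described above, so that the proof amounts to assembling three ingredients: the appearance of the boundary one-form through integration by parts, the smoothness hypothesis on \eqref{smoothred}, and Proposition \ref{CMRprop}. First I would reinstate the boundary. Starting from the closed-manifold BV data $(\mathcal{F}_M,\Omega,S,Q)$ of Definition \ref{BVdef}, the identity $\iota_Q\Omega=\delta S$ is obtained by a local integration by parts; when $\partial M\neq\emptyset$ the integrated total-derivative terms no longer vanish but survive on $\partial M$. Collecting them and noting that they depend only on the restrictions of the fields and their jets to the boundary, I would write them as the pullback $\widetilde{\pi}_M^*\widetilde{\alpha}$ of a one-form $\widetilde{\alpha}$ on $\widetilde{\mathcal{F}}_{\partial M}$, which is exactly the Noether relation \eqref{noether}. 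Since $[Q,Q]=0$ is a purely local, pointwise condition on the cohomological vector field and is untouched by the presence of a boundary, $(\mathcal{F}_M,\Omega,S,Q)$ is a broken BV-manifold.

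I would then perform the reduction. Setting $\widetilde{\omega}:=\delta\widetilde{\alpha}$ and invoking the hypothesis that $\ker(\widetilde{\omega})$ is a subbundle and that the leaf space \eqref{smoothred} is smooth, standard symplectic reduction produces the quotient $\mathcal{F}^\partial_{\partial M}$ with projection $\varpi$ and a genuine symplectic form $\omega^\partial=\underline{\widetilde{\omega}}$; a degree count ($|\Omega|=-1$ and $|Q|=1$ force $|\widetilde{\alpha}|=0$) gives $|\omega^\partial|=0$, as required by Definition \ref{BFVdef}. Applying Proposition \ref{CMRprop}, the field $Q$ is $\pi_M$-projectable and descends to a field $Q^\partial$ on $\mathcal{F}^\partial_{\partial M}$, which is Hamiltonian for a function $S^\partial$; since $|Q^\partial|=1$ and $|\omega^\partial|=0$ one gets $|S^\partial|=1$. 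The two BFV axioms are then immediate: $\iota_{Q^\partial}\omega^\partial=\delta S^\partial$ by the choice of $S^\partial$, and $[Q^\partial,Q^\partial]=0$ because $\varpi$-relatedness of vector fields is preserved under the Lie bracket while $[Q,Q]=0$. Hence $(\mathcal{F}^\partial_{\partial M},\omega^\partial,S^\partial,Q^\partial)$ is a BFV-manifold and $S^\partial$ obeys the CME.

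It remains to assemble the pair in the sense of Definition \ref{BV-BFV}. The map $\pi_M=\varpi\circ\widetilde{\pi}_M$ is a composition of surjective submersions, hence a surjective submersion. When the reduced symplectic form is exact, i.e. when $\widetilde{\alpha}$ itself descends along $\varpi$ to a one-form $\alpha^\partial$ with $\omega^\partial=\delta\alpha^\partial$, pulling back and substituting into \eqref{noether} yields the BV-BFV formula $\iota_Q\Omega=\delta S+\pi_M^*\alpha^\partial$, producing an exact BV-BFV pair; in general $\widetilde{\alpha}$ need not descend, and one only obtains the non-exact variant, which is precisely what the qualifier ``possibly exact'' in the statement records.

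I expect the single genuinely non-formal point to be the projectability of $Q$ --- that $Q$ descends along $\pi_M$ to the quotient, which requires its boundary behaviour to be compatible with the characteristic distribution $\ker(\widetilde{\omega})$ --- together with the existence of the degree-$1$ Hamiltonian $S^\partial$; both are exactly the content of Proposition \ref{CMRprop}, which I take as given. Everything else is bookkeeping: the surjective-submersion property of $\pi_M$, the degree assignments $|\omega^\partial|=0$, $|S^\partial|=1$, $|Q^\partial|=1$, and the smoothness caveat, which restricts the whole argument to the locus where \eqref{smoothred} is a manifold.
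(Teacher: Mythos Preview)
Your proposal is correct and follows exactly the construction the paper outlines in prose leading up to the statement: the paper does not supply its own proof here but cites \cite{CMR1}, having already sketched the argument via equation \eqref{noether}, the reduction \eqref{smoothred}, and Proposition \ref{CMRprop}. Your write-up is a faithful unpacking of that sketch, with the correct identification of the one nontrivial step (projectability of $Q$ and existence of $S^\partial$) as being delegated to Proposition \ref{CMRprop}.
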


\begin{remark}
Notice that in Definition \ref{BFVdef} it is possible to relax the requirement that the BFV 2-form $\widetilde{\omega}$ be nondegenerate and introduce the notion of \emph{pre-BFV} manifolds. When that is the case we may define \emph{pre-BV-BFV} pairs to be modeled over these more general pre-BFV-manifolds. Observe that a pre-BV-BFV pair $(\mathcal{F},\widetilde{\mathcal{F}})_{\widetilde\pi}$ such that the form $\mathrm{ker}(\widetilde{\omega})$ is a subbundle gives naturally rise to a BV-BFV pair on the symplectic reduction $\mathcal{F}^\partial = \widetilde{\mathcal{F}}\big\slash\mathrm{ker}(\widetilde{\omega})$, if smooth, by composing $\widetilde{\pi}$ with the symplectic reduction map.
\end{remark}

Some compatibility between bulk and boundary can always be achieved in terms of the space of pre-boundary fields, on which the differential of the Noether $1$-form is degenerate. The crucial assumption is that the symplectic reduction of this $2$-form should be smooth.

The advantage of such a point of view is at least twofold. First of all, as we just saw, the formalism is \emph{large enough} to be able to describe consistently what happens both in the bulk and in the boundary. On the other hand it is flexible enough to allow for symmetries that are more general than a Lie group action. For instance it is possible to accomodate symmetries that close only on shell (e.g. Poisson sigma model) or symmetries whose local generators are not linearly independent, where higher relations among the relations are required (e.g. BF theory or other theories involving $(d>1)$-differential forms.

Also notice that, even if we started with a Lie group action on the space of bulk fields, the BFV symmetries on the boundary may be of a more general type. This is actually the case in GR.

The BV-manifold that we have constructed in Theorem \ref{minimalBV} when a gauge theory of the BRST-kind was given is sometimes called the \emph{minimal BV-extension} of the gauge theory. When a non trivial boundary is allowed, we will use this minimal extension as the starting point for the BV-BFV analysis.

Note that the pre BV-BFV structure constructed above is not invariant under the extension to manifold with boundary of local BV diffeomorphisms. In addition, one may change the structure by changing the BV-form by a boundary contribution (which sometimes may be absorbed by a BV symplectomorphism). The different pre boundary 2-forms may have different kernels \cite{CS3}. In this paper we focus on the \emph{natural} description of PCH theory and show that the kernel does not have \emph{constant rank}; the possibility that modifications as above might solve the problem remains open.

In what follows we will check the BV-BFV axioms for the Palatini--Cartan--Holst theory of gravity and we will see that when diffeomorphisms are considered, the condition on $\widetilde{\omega}$ does indeed become an obstruction. In \cite{CS1} we proved that this step works in the Einstein--Hibert formulation of GR, in the ADM decomposition near the boundary.

Throughout the paper we will assume that $M$ is an oriented manifold that admits a Lorentzian structure.

\section{General Relativity in the Palatini--Cartan--Holst formalism}\label{Sect:tetrad}
It is possible to cast General Relativity as a theory of connections on a principal bundle, independent of the metric field. The definitions that follow are based on \cite{CS4}.

Let $\mathcal{V} \longrightarrow M$ be the Minkowski bundle over a 4-dimensional manifold $M$, with fibre the Minkowski space $(V,\eta)$, and let $P\longrightarrow M$ the associated principal $SO(3,1)$ bundle.

\begin{definition}\label{Def:PCH}
Let $F_\omega$ be the curvature of a connection $\omega\in\mathcal{A}_P$, regarded as a $\Wedge{2}V$-valued two-form under the identification $\mathfrak{so}(3,1)\simeq \Wedge{2}V$, and let $e\colon TM\longrightarrow \mathcal{V}$ be a bundle isomorphism covering the identity, i.e. $e\in\Omega_{nd}^1(M, \mathcal{V})$, where \emph{nd} stands for nondegenerate.

The \emph{Palatini--Cartan--Holst theory} is the assignment of the pair $(\mathcal{F}_{PCH}^0, S_{PCH}^0)_M$ to every $4$ dimensional manifold $M$ such that
\begin{equation}
\mathcal{F}_{PCH}^0=\Omega_{nd}^1(M, \mathcal{V}) \times \mathcal{A}_P;
\end{equation}
\begin{equation}\label{Action:PCH}
S_{PCH}^0=\intl_M \hat{T}_\gamma\left[ \frac12 e\wedge e\wedge F_\omega + \frac{\Lambda}{4} e^4 \right],
\end{equation}
where the map
\begin{equation}
\hat{T}_\gamma\colon \Wedge{2}V\otimes \Wedge{2}V {\longrightarrow} \mathbb{R}
\end{equation}
acts  as $\hat{T}(\alpha\otimes\beta) = \mathrm{Tr}\left[(1 + \star) \alpha \wedge \beta\right]$, with $\star$ the hodge operator defined by $\eta$, and the constants $\gamma,  \Lambda\in\mathbb{R}$ are respectively called Barbero--Immirzi parameter and cosmological constant.
\end{definition}

The PCH theory depends on the Barbero--Immirzi parameter $\gamma$ and the cosmological constant $\Lambda$. The role of the parameter $\gamma$ has been debated at length. We shall retain it for the sake of generality, as it will have no tangible effect in what follows, although it generates ambiguities in quantisation (see \cite{Rovth} and references therein).

We have constructed a field theory whose basic fields are a tetrad $e$ and an independent $\mathfrak{so}(3,1)$ connection $\omega$. To recover the standard Einstein--Hilbert metric formulation of GR one pulls back $\eta$ to $g\coloneqq e^*\eta$ and $\omega$ to $A\coloneqq e^*\omega$, and using the field equations imposes that $\omega=\omega(e)$ is the (unique) connection satisfying $d_\omega e=0$, which is equivalent to $\nabla_A g = 0$ and $A$ is the Levi--Civita connection.

\begin{remark}
Notice that tetrads have more local degrees of freedom than metrics (16 local against 10), but we have the gauge freedom to rotate a tetrad with a local Lorentz transformation without changing the action and the equations of motion. In addition to the usual space--time diffeomorphisms, we will take into account an internal $SO(3,1)$-symmetry as well.
\end{remark}

The Einstein--Hilbert and Palatini--Cartan--Holst theories are then equivalent \emph{on-shell}, that is they describe the same Euler--Lagrange locus, modulo symmetries. This is the content of the classical equivalence of field theories with vanishing boundary conditions.

\subsection{Classical boundary structure}
Let us summarise some of the results of the classical analysis of the boundary structure of PCH gravity presented in \cite{CS4}.

The adjective \emph{classical}, here and anywhere else in this paper, just means \emph{degree-0}. The basic procedure is equivalent to the one outlined in section \ref{b(f)vgauge}, without BV-extension. 

Denote $\mathcal{V}^\partial:=\iota^*\mathcal{V}$ the induced vector bundle on the boundary, $\iota\colon \partial M \longrightarrow M$, we also denote by $\Omega^1_{nd}(\partial M, \mathcal{V}^\partial)$ the space of $\mathcal{V}^\partial$-valued 1-forms that span a 3-dimensional subspace $W\subset V$ and by $P^\partial\equiv \iota^*P$ the induced principal bundle on the boundary. The space of restrictions of fields to the boundary projects to the (reduced) space of boundary fields, obtained as the quotient by the kernel of the map
$$\mathsf{W}_e^{(1,2)}\colon \Omega^1(\partial M, \Wedge{2}\mathcal{V}^\partial)\longrightarrow \Omega^2(\partial M, \Wedge{3}\mathcal{V}^\partial)$$
$$v\mapsto e\wedge v$$
which acts on the affine space $\mathcal{A}_{P^\partial}$, the natural space of restrictions of connections to the boundary and is surjective. More precisely, we have:
\begin{lemma}[\cite{CS4}]
\label{Lemma:kerWe}
The map 
$$\mathsf{W}_e^{(p,k)}\colon \Omega^p\left(\partial M,\bigwedge^{k}\mathcal{V}^\partial\right)\longrightarrow  \Omega^{p+1}\left(\partial M,\bigwedge^{k+1}\mathcal{V}^\partial\right)$$
defined by $\mathsf{W}^{(p,k)}_e(X)=X\wedge e$, where $e$ is the restiction of the tetrad to the boundary $\iota\colon \partial M \rightarrow M$, is injective for $p=k=1$ and it is surjective when $(p,k)=(1,2)$ or $(p,k)=(2,1)$. 
\end{lemma}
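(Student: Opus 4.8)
\emph{Plan.} The plan is to observe that $\mathsf{W}^{(p,k)}_e$ contains no derivatives --- it is fibrewise exterior multiplication by $e$ --- so it is $\smoo(\partial M)$-linear, i.e.\ a morphism of vector bundles. Hence injectivity (resp.\ surjectivity) of the induced map on sections is equivalent to fibrewise injectivity (resp.\ surjectivity), and I would fix a point $x\in\partial M$, write $U=T_x\partial M$ (three-dimensional) and let $V$ be the fibre of $\mathcal{V}$. The nondegeneracy hypothesis says precisely that $e_x\colon U\to V$ is injective with three-dimensional image $W\subset V$; fix a splitting $V=W\oplus\langle\theta\rangle$. The whole statement is then pure linear algebra, uniform in $x$.

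First I would exploit the splitting. Since $e$ takes values in $W$, multiplication by $e$ preserves the number of factors of $\theta$, so $L_e\coloneqq e\wedge(-)$ is block-diagonal with respect to $\Wedge{k}V=\Wedge{k}W\oplus\bigl(\Wedge{k-1}W\wedge\theta\bigr)$. This reduces the three claims to statements about the ``pure-$W$'' maps $\Wedge{p}U^*\otimes\Wedge{k}W\to\Wedge{p+1}U^*\otimes\Wedge{k+1}W$, plus the lower maps produced by the $\theta$-sector. Concretely: the $(1,1)$ claim needs injectivity of the $(1,1)$- and $(1,0)$-maps; the $(1,2)$ claim needs surjectivity of the $(1,2)$- and $(1,1)$-maps; the $(2,1)$ claim needs surjectivity of the $(2,1)$- and $(2,0)$-maps. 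In each case the two sub-maps land in complementary summands, so injectivity/surjectivity of the total map is equivalent to that of each piece.

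Next I would recognise all these as a single Lefschetz operator. Choosing the dual basis $\{e_a\}$ of $\{f^a\}$ in $U$ and setting $u_a=e(e_a)$, one has $e=\sum_a f^a\wedge u_a$, which on the six-dimensional space $U^*\oplus W$ is a symplectic form in Darboux form, hence nondegenerate --- this nondegeneracy is exactly the tetrad condition. Thus $L_e$ is the Lefschetz operator $L$ on $\Wedge{\bullet}(U^*\oplus W)$ of a $2n$-dimensional symplectic space with $n=3$, and the classical $\mathfrak{sl}_2$-structure of the exterior algebra of a symplectic space gives that $L\colon\Wedge{m}\to\Wedge{m+2}$ is injective for $m\le n-1=2$, surjective for $m\ge n-1=2$, and an isomorphism for $m=2$. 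Since $L$ raises \emph{both} exterior degrees by one it preserves the grading by $p-k$, so it is block-diagonal for that grading; therefore inj/surj on the full exterior power is equivalent to inj/surj on each bidegree block, and I may read off the bigraded statements from the total-degree ones. Reading off $m=p+k$: the $(1,1)$ block sits in $m=2\to4$ (isomorphism, hence injective); the $(1,2)$ and $(2,1)$ blocks sit in $m=3\to5$ (surjective); and the auxiliary $\theta$-sector maps $(1,0),(1,1),(2,0)$ sit in degrees $1\to3$, $2\to4$, $2\to4$ and are respectively injective, iso, iso, exactly as required.

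I expect the representation-theoretic input to be immediate once the operator is identified as $L$; the delicate points are bookkeeping rather than conceptual. One must check that $e$ really is nondegenerate on $U^*\oplus W$ (so that Lefschetz applies) and carefully track which sub-maps the $\theta$-decomposition produces in each of the three cases. One small subtlety to flag: the sign convention used in the wedge of $\mathcal V$-valued forms changes $L_e$ only by an overall sign $(-1)^{p}$ on the bidegree-$(p,k)$ block --- moving the vector part of $e$ past a $p$-form --- so it leaves all kernels and images unchanged, and one may freely compute in the genuine exterior algebra $\Wedge{\bullet}(U^*\oplus W)$. The main obstacle is thus organisational: ensuring the reduction to the pure-$W$ Lefschetz picture is faithful in all three cases.
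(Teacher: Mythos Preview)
Your argument is correct. Note that the present paper does not actually prove this lemma --- it is quoted from \cite{CS4} --- so there is no in-paper proof to compare against. Your approach via the Lefschetz operator is clean and efficient: the fibrewise reduction is justified by $\smoo$-linearity of $\mathsf{W}_e^{(p,k)}$, the splitting $V=W\oplus\langle\theta\rangle$ correctly isolates a pure-$W$ problem (since $e$ takes values in $W$), and the identification of $e=\sum_a f^a\wedge u_a$ as a nondegenerate element of $\Wedge{2}(U^*\oplus W)$ turns the three assertions into immediate consequences of the standard $\mathfrak{sl}_2$-structure on the exterior algebra of a six-dimensional symplectic vector space. The sign bookkeeping you flag is indeed harmless for injectivity and surjectivity, and the passage from total-degree statements to bidegree statements via block-diagonality in the $(p-k)$-grading is sound.
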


This is used to prove
\begin{theorem}[\cite{CS4}]
The classical space of boundary fields for the Palatini--Cartan--Holst theory is the symplectic manifold given by the fibre bundle:
\begin{equation}
\mathcal{F}_{PCH}^{0\partial}\longrightarrow \Omega_{nd}^1(\partial M, \mathcal{V}^\partial)\simeq T^* \Omega_{nd}^1(\partial M,\mathcal{V}^\partial)
\end{equation} 
with fibre over $e\in\Omega_{nd}^1(\partial M, \mathcal{V}^\partial)$ given by the reduction $\mathcal{A}_{P^\partial}^{red}\coloneqq\qsp{\mathcal{A}_{P^\partial}}{\sim}$ with respect to the equivalence relation $\omega\sim \omega' \iff \omega-\omega'\in \mathrm{Ker}(\mathsf{W}^{(1,2)}_e)$ and the symplectic form reads
\begin{equation}\label{classicalboundaryform}{
\varpi^{0\partial}_{PCH}=\int\limits_{\partial M}\hat{\mathrm{T}}_\gamma\left[\mathbf{e}\wedge\delta\mathbf{e}\wedge\delta\bom \right]} = \intl_{\partial M}\mathrm{Tr}\left[\delta\bem\delta\mathbf{t}_\gamma\right].
\end{equation}
with $\bom$ denoting an equivalence class in $\mathcal{A}^{red}_{P^\partial}$, and $\mathbf{t}_\gamma\coloneqq \bem\wedge T_\gamma[\bom]$.
\end{theorem}


\begin{remark}\label{splittingremark}
The global Darboux chart requires choice of a reference connection and follows from the surjectivity of $\mathsf{W}_e^{(1,2)}$. If we denote by $\mathcal{W}\coloneqq \mathrm{ker}(\mathsf{W}_e^{(1,2)})$ we can choose a complement in 
$$\Omega^1(\partial M,\Wedge{2}\mathcal{V}^\partial)=\mathcal{W} \oplus \mathcal{C}$$
and split $\omega = \widetilde{\omega} + v$, with $v\in\mathcal{W}$. Fields in $\mathcal{A}_{P^\partial}$ are then equivalence classes of $P^\partial$-connections, modulo $\mathcal{W}$.
\end{remark}

The algebra of constraints has the following structure

\begin{theorem}[\cite{CS4}]\label{constraintstheorem}
In the symplectic manifold 
$$\mathcal{F}_{PCH}^{\partial}\longrightarrow \Omega^1_{nd}(\partial M,\mathcal{V}^\partial)$$ 
with symplectic form $\varpi^\partial_{PCH}$ as in Eq. \eqref{classicalboundaryform}, the vanishing locus $\mathcal C_{PCH}$ of the functions:
\begin{equation}
\mathbf{L}_\alpha=\intl_{\partial M} \hat{T}_\gamma[\alpha\wedge \mathbf{e}\wedge d_{\bom}\mathbf{e}];\ \ \mathbf{J}_\mu = \intl_{\partial M}\hat{T}_\gamma\left[ \mu \wedge\bem\wedge F_{\bom}\right] + \mathrm{Tr}\left[\Lambda\mu\wedge\bem^3\right]
\end{equation}
with $\mu\in\Omega^0(\partial M,\mathcal{V}^\partial)$ and $\alpha\in\Omega^0(M,\bigwedge^2\mathcal{V}^\partial)$ is coisotropic. We have the algebraic structure:
\begin{subequations}\begin{align}
\{\mathbf{L}_\alpha, \mathbf{L}_{\alpha'}\} & =  \mathbf{L}_{[\alpha',\alpha]}\\
\{\mathbf{J}_\mu,\mathbf{J}_{\mu'}\} & = \mathbf{L}_{X^{\mu\mu'}}\\
\{\mathbf{L}_\alpha,\mathbf{J}_\mu\} & = \mathbf{J}_{[\alpha,\mu]} + \mathbf{L}_{H^{\mu\alpha}}
\end{align}\end{subequations}
where $X^{\mu\mu}$ and $H^{\mu,\alpha}$ are functions of the fields $\bem,\bom$ and depend on a choice of a complement of $\mathrm{Ker}(\mathsf{W}_e^{(1,2)})$.
\end{theorem}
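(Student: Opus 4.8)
The plan is to obtain the three relations by computing the Hamiltonian flows of the smeared constraints directly, and then to read off coisotropy from the fact that each bracket lands again among the constraints. Everything hinges on first extracting the Hamiltonian vector fields $X_{\mathbf{L}_\alpha}$, $X_{\mathbf{J}_\mu}$ from $\iota_X\varpi^\partial_{cl}=\delta(\,\cdot\,)$. Writing $\varpi^\partial_{cl}=\int_{\partial M}\hat{\mathrm{T}}_\gamma[\bem\wedge\delta\bem\wedge\delta\bom]$ and contracting a generic vertical field, matching the $\delta\bem$- and $\delta\bom$-components shows that the flow of $\bom$ is recovered from the $\delta\bem$-part of $\delta F$ through $(\mathsf{W}_e^{(1,2)})^{-1}$ (well-defined modulo $\mathcal{W}$ by surjectivity), while the flow of $\bem$ is recovered from the $\delta\bom$-part through $(\mathsf{W}_e^{(1,1)})^{-1}$ (well-defined by injectivity). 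These inverses exist in exactly the bidegrees certified by Lemma \ref{Lemma:kerWe}, together with nondegeneracy of the tetrad. Before this I would verify that $\mathbf{L}_\alpha$ and $\mathbf{J}_\mu$ are $\mathcal{W}$-basic, so that they descend to the reduced space of Remark \ref{splittingremark}; this holds because the connection enters both functionals only through the combination $\bem\wedge(\,\cdot\,)$, which annihilates $\mathcal{W}=\ker\mathsf{W}_e^{(1,2)}$.

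Next I would treat the two brackets that close without field dependence. Since $\mathbf{L}_\alpha$ is the Hamiltonian generator of the residual internal $SO(3,1)$ gauge action, its flow is the corresponding fundamental vector field $\delta_\alpha\bem=[\alpha,\bem]$, $\delta_\alpha\bom=d_{\bom}\alpha$, the inverse operators collapsing by $\mathsf{W}_e^{(1,1)}([\alpha,\bem])=\bem\wedge[\alpha,\bem]$; applying this flow to $\mathbf{L}_{\alpha'}$ and using the Leibniz rule for $d_{\bom}$ and the Jacobi identity for the fibrewise bracket returns $-\mathbf{L}_{[\alpha,\alpha']}$. The $\mathbf{J}$--$\mathbf{L}$ relation is cleanest read off from invariance: $\hat{\mathrm{T}}_\gamma$ is $SO(3,1)$-invariant and $\mathbf{J}_\mu$ is assembled from the covariant objects $\bem$, $F_{\bom}$ and $\bem^3$ contracted against $\mu$, so the gauge flow leaves the integrand invariant once $\mu$ is rotated as well; isolating the variation of the fields alone is therefore equivalent to rotating the smearing with the opposite sign, giving $\{\mathbf{J}_\mu,\mathbf{L}_\alpha\}=-\mathbf{J}_{[\alpha,\mu]}$, where $\delta_\alpha F_{\bom}=[F_{\bom},\alpha]$ is used en route.

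The hard part will be the $\mathbf{J}$--$\mathbf{J}$ bracket, where the inverse operators in $X_{\mathbf{J}_\mu}$ no longer cancel. Acting with $X_{\mathbf{J}_\mu}$ on $\mathbf{J}_{\mu'}$, the connection flow has already been resolved through $(\mathsf{W}_e^{(1,2)})^{-1}$, producing an expression of the schematic form $\hat{\mathrm{T}}_\gamma[\mu'\,d_{\bom}(\bem\wedge\mu)\wedge(\cdots)]$ in which $(\cdots)$ carries the leftover $\bem$-contraction. The plan is to repackage this, by integration by parts and the Leibniz rule for $d_{\bom}$, as a single $\mathbf{J}$-type functional; undoing the $\bem$-contraction so as to expose a $\mu$-like smearing forces the second inverse $(\mathsf{W}_e^{(0,1)})^{-1}$, and the combined operator is exactly $X^{\mu\mu'}=(\mathsf{W}_e^{(0,1)})^{-1}(\mathsf{W}_e^{(1,2)})^{-1}[\mu'd_{\bom}(\bem\wedge\mu)]$. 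I expect the genuine obstacle to lie here: tracking which inverse acts in which bidegree, checking that the spurious terms either antisymmetrise away under $\mu\leftrightarrow\mu'$ or are removed using the surjectivity properties of Lemma \ref{Lemma:kerWe}, and verifying that $X^{\mu\mu'}$ is a bona fide $\mathcal{V}$-valued section so that $\mathbf{J}_{X^{\mu\mu'}}$ is an admissible constraint. Everything else in this step is Bianchi ($d_{\bom}F_{\bom}=0$), Leibniz and antisymmetry.

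Finally, coisotropy is a corollary of the three identities. The vanishing ideal of $C_{PCH}$ is generated by the functionals $\mathbf{L}_\alpha$ and $\mathbf{J}_\mu$; since every pairwise bracket is again one of these (with an admissible, possibly $\bem$- and $\bom$-dependent, smearing), the ideal is closed under $\{\,\cdot\,,\,\cdot\,\}$, and restricting the relations to $C_{PCH}$ shows the brackets vanish there. This is precisely the statement that $C_{PCH}$ is coisotropic.
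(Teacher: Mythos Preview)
The paper does not actually prove this theorem: it is quoted verbatim from \cite{CS4} and carries no proof here, so there is nothing in the present paper to compare your attempt against. Your outline follows the standard route one would take for such a statement---extract the Hamiltonian vector fields from $\iota_X\varpi^\partial_{cl}=\delta F$ via the properties of $\mathsf{W}_e^{(p,k)}$ recorded in Lemma~\ref{Lemma:kerWe}, verify $\mathcal{W}$-basicness so that the functionals descend, compute the three brackets, and read off coisotropy from closure---and this is in spirit what \cite{CS4} does as well.

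Two points deserve care if you carry this out. First, when you invoke $(\mathsf{W}_e^{(1,1)})^{-1}$ you are relying on injectivity, not surjectivity, so you must also check that the $\delta\bom$-component of $\delta\mathbf{J}_\mu$ actually lies in the image of $\mathsf{W}_e^{(1,1)}$; otherwise $\mathbf{J}_\mu$ fails to be Hamiltonian on the reduced space. Second, your treatment of $\{\mathbf{J}_\mu,\mathbf{J}_{\mu'}\}$ is, as you acknowledge, only schematic: the honest computation requires showing that the combination $\mu' d_{\bom}(\bem\wedge\mu)$ lands in the image of $\mathsf{W}_e^{(1,2)}\circ\mathsf{W}_e^{(0,1)}$ and that the cosmological terms drop out (or contribute consistently), neither of which you have addressed. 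These are precisely the technical points that make this the nontrivial bracket, and a complete proof cannot leave them at the level of expectation.
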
 

\begin{corollary}\label{corollaryBP}
The vanishing locus of the functions $\{\mathbf{L}_\alpha; \mathbf{J}_{\iota_\xi\bem}\}$, where $\xi$ is a vector field tangent to $\partial M$, defines a coisotropic submanifold $C_{BP}\supset C_{PCH}$. The subalgebra structure is given by
\begin{subequations}\begin{align}
\{\mathbf{J}_\xi, \mathbf{J}_{\xi'}\} & = \mathbf{J}_{[\xi,\xi']} + \mathbf{L}_{\iota_\xi\iota_{\xi'} F_{\bom_\gamma}}\\
\{\mathbf{J}_\xi, \mathbf{L}_\alpha\} & = -\mathbf{J}_{[\alpha,\iota_\xi\bem]}
\end{align}\end{subequations}
while the Hamiltonian vector field $\mathbb{J}_{\iota_\xi\bem}$ reads
\begin{equation}\label{hamiltonBP}
(\mathbb{J}_\xi)_{\bem}=- L^{\bom}_\xi\bem ;\ \  (\mathbb{J}_\xi)_{\bom} = -\iota_\xi F_{\bom}
\end{equation}

\end{corollary}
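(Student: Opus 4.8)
The plan is to derive the Corollary by specialising the constraint algebra of Theorem \ref{constraintstheorem} to the field-dependent smearing $\mu=\iota_\xi\bem$, with $\xi$ a boundary-tangent vector field, and then to extract the geometry. I would first dispose of the inclusion $C_{BP}\supset C_{PCH}$: since $\bem$ is nondegenerate onto the three-dimensional subspace $W\subset V$, the parameters of the form $\iota_\xi\bem$ sweep out only $W$-valued sections, i.e. a proper subset of the $\iota^*\mathcal{V}$-valued smearings admitted in $\mathbf{J}_\mu$. Hence $\{\mathbf{L}_\alpha;\mathbf{J}_{\iota_\xi\bem}\}$ is a subfamily of the functions cutting out $C_{PCH}$, its common zero locus $C_{BP}$ therefore contains $C_{PCH}$, and the direction that is dropped is precisely the transverse (``Hamiltonian'') component of $\mathbf{J}$.

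Next I would compute the Hamiltonian vector field $\mathbb{J}_{\iota_\xi\bem}$ directly from $\iota_{\mathbb{J}_\xi}\varpi=\delta\mathbf{J}_{\iota_\xi\bem}$. Varying $\mathbf{J}_{\iota_\xi\bem}=\int_{\partial M}\hat{T}_\gamma[(\iota_\xi\bem)\bem F_{\bom}+\Lambda(\iota_\xi\bem)\bem^3]$ and integrating by parts, I would use the covariant Cartan identity $L^{\bom}_\xi=\iota_\xi d_{\bom}+d_{\bom}\iota_\xi$ together with the Bianchi identity $d_{\bom}F_{\bom}=0$ to organise the $\delta\bem$- and $\delta\bom$-contributions. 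Matching against the symplectic form \eqref{classicalboundaryform} then produces two equations of the schematic shape $\bem\wedge(\,\cdot\,)=(\text{known})$; inverting them is legitimate by the injectivity and surjectivity properties of the maps $\mathsf{W}_e$ collected in Lemma \ref{Lemma:kerWe}, and isolates the two components as $(\mathbb{J}_\xi)_{\bem}=-L^{\bom}_\xi\bem$ and $(\mathbb{J}_\xi)_{\bom}=-\iota_\xi F_{\bom}$, exactly as in \eqref{hamiltonBP}.

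With the Hamiltonian vector fields in hand the brackets follow from $[\mathbb{X}_f,\mathbb{X}_g]=\mathbb{X}_{\{f,g\}}$. For $\{\mathbf{J}_\xi,\mathbf{J}_{\xi'}\}$ I would compute the commutator of the two covariant Lie-derivative actions: the identity $[L^{\bom}_\xi,L^{\bom}_{\xi'}]=L^{\bom}_{[\xi,\xi']}+[\iota_\xi\iota_{\xi'}F_{\bom},\,\cdot\,]$ splits the result into a genuine diffeomorphism, generated by $\mathbf{J}_{[\xi,\xi']}$, and a residual internal rotation by the field-dependent parameter $\iota_\xi\iota_{\xi'}F_{\bom_\gamma}$, generated by $\mathbf{L}_{\iota_\xi\iota_{\xi'}F_{\bom_\gamma}}$ (the $\gamma$-shift tracking the Holst pairing $\hat{T}_\gamma$). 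The essential point is that this $\mathbf{L}$-term is produced by the $\bem$-dependence of the smearing $\mu=\iota_\xi\bem$: it is the hallmark of an open algebra with field-dependent structure functions. For $\{\mathbf{J}_\xi,\mathbf{L}_\alpha\}$ I would specialise the third relation of Theorem \ref{constraintstheorem}, $\{\mathbf{J}_\mu,\mathbf{L}_\alpha\}=-\mathbf{J}_{[\alpha,\mu]}$, to $\mu=\iota_\xi\bem$, obtaining $-\mathbf{J}_{[\alpha,\iota_\xi\bem]}$.

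The step I expect to be the real obstacle is the coisotropy of $C_{BP}$, because it is here that the dropped transverse direction threatens to return. Closure of $\{\mathbf{L}_\alpha,\mathbf{L}_{\alpha'}\}$ and of $\{\mathbf{J}_\xi,\mathbf{J}_{\xi'}\}$ is harmless, since both land in the span of the $\mathbf{L}$'s and the $\mathbf{J}_{\iota_\bullet\bem}$'s and hence vanish on $C_{BP}$; but $[\alpha,\iota_\xi\bem]$ generically leaves $W$, so $\mathbf{J}_{[\alpha,\iota_\xi\bem]}$ carries a priori a transverse component, i.e. a slice of the very Hamiltonian constraint that is \emph{not} imposed on $C_{BP}$. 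The displayed relation $\{\mathbf{J}_\xi,\mathbf{L}_\alpha\}=-\mathbf{J}_{[\alpha,\iota_\xi\bem]}$ is the fixed-smearing (structure-function) form; for coisotropy what matters is the honest Poisson bracket, in which the $\bem$-dependence of $\mu=\iota_\xi\bem$ along the flow of $\mathbf{L}_\alpha$ must also be differentiated. I expect this additional, linear-in-$\mu$ contribution to cancel (or tangentially project away) the transverse part, leaving a quantity in the ideal generated by $\{\mathbf{L}_\bullet;\mathbf{J}_{\iota_\bullet\bem}\}$. Verifying this cancellation, which rests on the $\mathfrak{so}(3,1)$-invariance of $\hat{T}_\gamma$ and once more on the inversion of $\mathsf{W}_e$ from Lemma \ref{Lemma:kerWe}, is the crux of the argument; granting it, the vanishing ideal of $C_{BP}$ is Poisson-closed and $C_{BP}$ is coisotropic, with the displayed relations as its first-class structure.
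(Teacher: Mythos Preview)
The paper does not supply a proof of this corollary: it is stated immediately after Theorem~\ref{constraintstheorem}, which is itself imported from the companion paper \cite{CS4}, and no argument appears in the present text. There is therefore nothing here to compare your attempt against directly.

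On its own terms your strategy is the natural one. Specialising the algebra of Theorem~\ref{constraintstheorem} to $\mu=\iota_\xi\bem$, extracting the Hamiltonian vector field by varying $\mathbf{J}_{\iota_\xi\bem}$ and inverting the maps $\mathsf{W}_e$ via Lemma~\ref{Lemma:kerWe}, and obtaining $\{\mathbf{J}_\xi,\mathbf{J}_{\xi'}\}$ from the curvature identity $[L^{\bom}_\xi,L^{\bom}_{\xi'}]=L^{\bom}_{[\xi,\xi']}+[\iota_\xi\iota_{\xi'}F_{\bom},\,\cdot\,]$ (the ungraded form of Lemma~\ref{Lem:covariantL}) is exactly what one would reconstruct from \cite{CS4}.

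You are also right to flag the mixed bracket as the only place where coisotropy of $C_{BP}$ is genuinely at stake: $[\alpha,\iota_\xi\bem]$ need not lie in the image $W$ of $\bem$, so the displayed relation, read as a fixed-smearing identity, does not by itself exhibit closure in the ideal of $C_{BP}$. Your diagnosis --- that the honest bracket acquires a further term because $\mathbb{L}_\alpha$ acts on the smearing $\mu=\iota_\xi\bem$ through $(\mathbb{L}_\alpha)_{\bem}$, and that this extra piece is what reconciles the formula with coisotropy --- is the correct mechanism. That said, this is the one step you leave as an expectation rather than a computation; since it is precisely what distinguishes ``$C_{BP}$ is coisotropic'' from the weaker ``the brackets close on $C_{PCH}$'', you should carry it out explicitly rather than grant it.
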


\begin{remark}
We will recover the resolution of the submanifold $C_{PB}$ in the BFV formalism in section \ref{Sect:pres}. It is important to observe that $C_{BP}$ does not describe the correct structure of General Relativity, for we eliminated one of the constraints, whereas $C_{PCH}$ does, as was shown in \cite{CS4}. The resulting reduced phase space has 3 local degrees of freedom, instead of 2.
\end{remark}

\section{Covariant BV theory}\label{Sect:BVPCH}
In this section we would like to promote Palatini--Cartan--Holst theory as presented in Definition \ref{Def:PCH} to the data of a BV-manifold. The PCH description of gravity is a BRST-like gauge theory, and it admits a BV extension, similarly to the Einstein--Hilbert version \cite{CS1}. Differently from the EH case, however, here we have to deal with an an internal $\mathfrak{so}(3,1)$ gauge freedom in addition to space--time diffeomorphisms. 

We define a covariant BV operator that represents the action of (infinitesimal) diffeomorphisms for all theories of G-connections and sections of G-associated bundles, and use it to define a solution of the Classical Master Equation for PCH gravity.

For the results in this section we will need the following:
\begin{lemma}\label{Lem:covariantL}
Let $P\longrightarrow M$ be a principal $G$-bundle and let $A\in\mathcal{A}_P$ be a connection on it. Let $\xi\in\mathfrak{X}[1](M)$ be a degree-$1$ vector field on $M$, and $\mathcal{V}$ an associated vector bundle with typical fibre the $\mathfrak{g}$ module $V_\mathfrak{g}$. For any differential form $\Phi\in\Omega^\bullet(M, \mathcal{V})$ define the covariant Lie derivative to be
\begin{equation}
L_\xi^A \Phi=[\iota_\xi,d_A]\Phi
\end{equation}
with $d_A$ being the covariant derivative induced by the connection $A$. We have the formula:
\begin{equation}
L_{[\xi,\xi]}^A\Phi - [L_\xi^A,L_\xi^A]\Phi + [\iota_\xi\iota_\xi F_A,\Phi]= 0
\end{equation}
\end{lemma}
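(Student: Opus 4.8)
The plan is to verify the identity by expanding every term using the definition $L_\xi^A\Phi = [\iota_\xi, d_A]\Phi = \iota_\xi d_A \Phi + d_A \iota_\xi \Phi$ (the sign in the graded commutator reflects that $\xi$ has degree $1$, so $\iota_\xi$ raises total degree by one and anticommutes appropriately). The key algebraic inputs are the graded Cartan-type relations among $\iota_\xi$, $d_A$, and the curvature: namely $d_A^2 \Phi = [F_A, \Phi]$ (the defining property of curvature for the covariant derivative on associated-bundle–valued forms), together with the graded bracket identities for contraction operators. Because $\xi$ is now an \emph{odd} (degree-$1$) vector field, I would be careful that $\iota_\xi \iota_\xi$ need not vanish; rather $\iota_\xi \iota_\xi = \tfrac12 \iota_{[\xi,\xi]}$, and $[\xi,\xi]$ is a genuine (degree-$2$) vector field, not automatically zero. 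This odd bookkeeping is what makes the three terms conspire.

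First I would establish the graded commutation relations I intend to use. I would record $[\iota_\xi, \iota_\xi] = \iota_{[\xi,\xi]}$ as an operator on forms (this is the odd analogue of the classical fact that contraction by a vector squares to zero), and the graded Jacobi/Cartan relation $[L_\xi^A, \iota_\xi] = \iota_{[\xi,\xi]}$ together with $[d_A, L_\xi^A]$-type identities. I would then compute $[L_\xi^A, L_\xi^A]\Phi$ directly: since $L_\xi^A = [\iota_\xi, d_A]$, the graded commutator $[L_\xi^A, L_\xi^A]$ expands via the graded Jacobi identity into terms involving $[\iota_\xi,\iota_\xi]$, $d_A^2$, and mixed pieces. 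The $d_A^2$ contributions are where the curvature $F_A$ enters, producing precisely a $[\iota_\xi\iota_\xi F_A, \Phi]$-type term, while the $[\iota_\xi,\iota_\xi] = \iota_{[\xi,\xi]}$ contributions reassemble into $L_{[\xi,\xi]}^A\Phi$.

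The main obstacle will be tracking the signs correctly under the two interacting gradings: the form degree of $\Phi$ and the odd (ghost) degree of $\xi$. A clean way to avoid sign errors is to work with the total degree throughout and treat $\iota_\xi$ uniformly as an odd-derivation-type operator, so that all brackets $[\cdot,\cdot]$ are graded commutators; then the graded Jacobi identity applied to $[[\iota_\xi,d_A],[\iota_\xi,d_A]]$ organizes the whole computation. Concretely, I expect the expansion to yield
\begin{equation}
[L_\xi^A, L_\xi^A] = [[\iota_\xi, d_A], [\iota_\xi, d_A]] = 2\,[\iota_\xi, [d_A, [\iota_\xi, d_A]]],
\end{equation}
and then resolving the inner brackets using $d_A^2 = [F_A, \cdot]$ and $[\iota_\xi,\iota_\xi] = \iota_{[\xi,\xi]}$ splits this into the covariant Lie derivative along $[\xi,\xi]$ plus the curvature correction. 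Matching these two contributions against $L_{[\xi,\xi]}^A\Phi$ and $[\iota_\xi\iota_\xi F_A, \Phi]$ respectively, and checking that the numerical coefficients and signs agree, completes the identity. The delicate point to get right is the factor of $2$ and the sign on the curvature term, which is exactly where the oddness of $\xi$ (and hence the nonvanishing of $\iota_\xi\iota_\xi$) is essential.
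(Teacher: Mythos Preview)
Your strategy---organize the computation via graded Jacobi applied to $L_\xi^A=[\iota_\xi,d_A]$ and feed in $d_A^2=[F_A,\cdot]$---is different from the paper's proof, which instead writes $d_A=d+[A,\cdot]$, invokes the flat identity $L_{[\xi,\xi]}=[L_\xi,L_\xi]$, and brute-force expands every connection-dependent term. Your route is more structural. However, there is a real gap in the bookkeeping. In the paper's conventions $\iota_\xi$ has \emph{total degree zero} (ghost degree $+1$ and form degree $-1$ cancel), so it is an even operator; the paper states this explicitly in the proofs of the two lemmas immediately following this one. Hence $[\iota_\xi,d_A]=\iota_\xi d_A-d_A\iota_\xi$ (not the sum you wrote), the graded commutator $[\iota_\xi,\iota_\xi]$ vanishes trivially, and the separate identity $\iota_\xi\iota_\xi=\tfrac12\iota_{[\xi,\xi]}$ is simply false: take $\xi=c^1\partial_x+c^2\partial_y$ with anticommuting constants $c^i$, so that $[\xi,\xi]=0$ but $\iota_\xi\iota_\xi(dx\wedge dy)=2c^1c^2\neq 0$. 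Thus the step where ``the $[\iota_\xi,\iota_\xi]=\iota_{[\xi,\xi]}$ contributions reassemble into $L_{[\xi,\xi]}^A\Phi$'' does not go through.

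The repair is to use the Cartan relation you also listed, $[L_\xi^A,\iota_\xi]=\iota_{[\xi,\xi]}$, in place of the false contraction--contraction identity. Graded Jacobi on $[[\iota_\xi,d_A],L_\xi^A]$ gives
\[
[L_\xi^A,L_\xi^A]=[\iota_\xi,[d_A,L_\xi^A]]-[d_A,[\iota_\xi,L_\xi^A]].
\]
A one-line computation shows $[d_A,L_\xi^A]=[\iota_\xi,d_A^2]=[\iota_\xi F_A,\cdot]$, while $[\iota_\xi,L_\xi^A]=-\iota_{[\xi,\xi]}$ by the Cartan relation (this is exactly the content of the paper's subsequent Lemma~\ref{connectionindependence}). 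Since $[\xi,\xi]$ has ghost degree $2$, the operator $\iota_{[\xi,\xi]}$ is odd, so $[d_A,\iota_{[\xi,\xi]}]=L_{[\xi,\xi]}^A$; and $[\iota_\xi,[\iota_\xi F_A,\cdot]]=[\iota_\xi\iota_\xi F_A,\cdot]$ by the derivation property. This yields $[L_\xi^A,L_\xi^A]=L_{[\xi,\xi]}^A+[\iota_\xi\iota_\xi F_A,\cdot]$, which is the lemma. So your plan becomes a clean proof once the correct parity of $\iota_\xi$ and the correct Cartan identity are in place.
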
 
\begin{proof}
The proof is just a straightforward computation:
\begin{multline*}
L_{[\xi,\xi]}^A\Phi - [L_\xi^A,L_\xi^A]\Phi = L_{[\xi,\xi]}\Phi - [L_\xi,L_\xi]\Phi +  \iota_{[\xi,\xi]}[A,\Phi] + [A,\iota_{[\xi,\xi]}\Phi]\\
 - \iota_\xi d[\iota_\xi A,\Phi] - \iota_\xi[A,L_\xi^A\Phi] + d\iota_\xi[\iota_\xi A,\Phi]  +[A,\iota_\xi L_\xi^A\Phi]=\\
=2\iota_\xi d\iota_\xi[A,\Phi] - \iota_\xi\iota_\xi d[A,\Phi - d\iota_\xi\iota_\xi[A,\Phi]+[A,2\iota_\xi d\iota_\xi\Phi - \iota_\xi\iota_\xi d\Phi] \\
- 2\iota_\xi d\iota_\xi[A,\Phi] + \iota_\xi d[A,\iota_\xi\Phi] - \iota_\xi[A,\iota_\xi d\Phi - \iota_\xi[A,[\iota_\xi A,\Phi]] \\
+ \iota_\xi[A,d\iota_\xi\Phi] + d[\iota_\xi A,\iota_\xi\Phi] + [A,\iota_\xi\iota_\xi \Phi + [\iota_\xi A,\iota_\xi \Phi] - \iota_\xi d\iota_\xi\Phi]=0
\end{multline*}
as it can be carefully checked by expanding all terms. We used the odd version of the well known identity $ L_{[\xi,\xi]}\Phi - [L_\xi,L_\xi]\Phi =0$ ($\xi$ has degree 1), of which this Lemma is some covariant generalisation.
\end{proof}

\begin{lemma}\label{connectionindependence}
Under the same assumptions of Lemma \ref{Lem:covariantL}, we have that 
\begin{equation}
\iota_{[\xi,\xi]}^A\Phi := [L_\xi^A,\iota_\xi]\Phi = \iota_{[\xi,\xi]}\Phi
\end{equation}
i.e. such a combination does not depend on the connection $A$.
\end{lemma}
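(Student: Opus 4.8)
The plan is to isolate the connection dependence of $L_\xi^A$ into a purely algebraic (zeroth-order) term, and then to show that this term drops out of the commutator $[L_\xi^A,\iota_\xi]$, leaving the connection-free expression $\iota_{[\xi,\xi]}$.

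First I would rewrite the covariant derivative as $d_A\Phi = d\Phi + [A,\Phi]$, where $[A,\cdot]$ denotes the module action of the connection $1$-form. Feeding this into the definition $L_\xi^A\Phi = [\iota_\xi,d_A]\Phi$ and applying the graded Leibniz rule for the contraction $\iota_\xi$ (keeping in mind that $\xi$ has ghost degree $1$, so that $\iota_\xi$ is even and annihilates the $0$-form $\iota_\xi A$), I expect to obtain the splitting
\begin{equation}
L_\xi^A\Phi = L_\xi\Phi + [\iota_\xi A,\Phi],
\end{equation}
where $L_\xi=[\iota_\xi,d]$ is the ordinary Lie derivative and $[\iota_\xi A,\cdot]$ is the fibrewise module action by the (ghost-degree $1$) section $\iota_\xi A$. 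This is the covariant analogue of the statement that a covariant Lie derivative equals an ordinary Lie derivative corrected by an infinitesimal gauge transformation.

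With this splitting in hand, the commutator decomposes as
\begin{equation}
[L_\xi^A,\iota_\xi]\Phi = [L_\xi,\iota_\xi]\Phi + \big[[\iota_\xi A,\cdot],\iota_\xi\big]\Phi.
\end{equation}
The first term is handled by the graded Cartan identity $[L_\xi,\iota_\xi]\Phi=\iota_{[\xi,\xi]}\Phi$, the odd counterpart of the classical formula $[L_\xi,\iota_\eta]=\iota_{[\xi,\eta]}$, in the same spirit as the identity already used in Lemma \ref{Lem:covariantL}. For the second term I would observe that $[\iota_\xi A,\cdot]$ is a zeroth-order operator acting only on the fibre values and adding no form degree, so the contraction $\iota_\xi$ commutes with it up to a sign that, because $\iota_\xi$ is even, is trivial; concretely $\iota_\xi[\iota_\xi A,\Phi]=[\iota_\xi A,\iota_\xi\Phi]$ since $\iota_\xi$ annihilates the $0$-form coefficient $\iota_\xi A$. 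Hence the second term vanishes identically and one is left with $[L_\xi^A,\iota_\xi]\Phi=\iota_{[\xi,\xi]}\Phi$. Since $A$ entered only through the term that has just been shown to cancel, the right-hand side is manifestly independent of the connection, which is the assertion.

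The routine but delicate part is the sign bookkeeping in the graded Cartan calculus, caused by $\xi$ carrying ghost degree $1$: one must consistently treat $\iota_\xi$ as even and $L_\xi^A,d_A$ as odd, and verify that the graded Leibniz rule produces the stated splitting with the correct signs. The conceptual content, however, is simply that the connection affects $L_\xi^A$ only through a gauge transformation whose generator $\iota_\xi A$ is a fibre endomorphism, and such an operator commutes with the form-contraction $\iota_\xi$.
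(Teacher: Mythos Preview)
Your proof is correct and follows essentially the same idea as the paper: both isolate the $A$-dependent contribution and show it vanishes because $\iota_\xi$ is a degree-zero derivation. The only difference is organizational---you package the connection dependence into the operator $[\iota_\xi A,\cdot]$ at the level of $L_\xi^A$ before commuting with $\iota_\xi$, whereas the paper expands $[L_\xi^A,\iota_\xi]$ and $[L_\xi,\iota_\xi]$ directly via $d_A=d+[A,\cdot]$ and collects the difference into the three-term expression $\mathsf{B}=2\iota_\xi[A,\iota_\xi\Phi]-\iota_\xi\iota_\xi[A,\Phi]-[A,\iota_\xi\iota_\xi\Phi]$, which vanishes for the same reason.
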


\begin{proof}
First, one shows that 
\[
\mathsf{B}:=2\iota_\xi[A,\iota_\xi \Phi] - \iota_\xi\iota_\xi[A,\Phi] - [A,\iota_\xi\iota_\xi \Phi] = 0
\]
since $\iota_\xi$ is a derivation of degree $0$ on (Lie algebra valued) differential forms. So we can write, adding $\mathsf{B}\equiv0$
$$\iota_{[\xi,\xi]}\Phi= 2\iota_\xi d\iota_\xi \Phi - \iota_\xi\iota_\xi d\Phi - d\iota_\xi\iota_\xi \Phi + \mathsf{B}=2\iota_\xi d_A\iota_\xi F_A - \iota_\xi\iota_\xi d_AF_A - d_A\iota_\xi\iota_\xi F_A = \iota_{[\xi,\xi]}^A\Phi$$
proving the statement. 
\end{proof}

Moreover, we have 
\begin{lemma}\label{iotaLiotaF}
Let $A$ be a connection on a principal bundle $P\longrightarrow M$ and $F_A$ its curvature form. Let $\xi\in\mathfrak{X}[1](M)$ be a degree-$1$ vector field. Then we have 
\begin{equation}
\iota_\xi L_\xi^A\iota_\xi F_A =0
\end{equation}
\end{lemma}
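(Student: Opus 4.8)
The plan is to reduce the whole expression to the curvature alone by commuting the contractions through $d_A$ and invoking the Bianchi identity $d_A F_A = 0$. First I would expand the covariant Lie derivative through its definition $L_\xi^A=[\iota_\xi,d_A]$ and the graded Leibniz rule, writing
\begin{equation*}
\iota_\xi L_\xi^A \iota_\xi F_A = \iota_\xi\iota_\xi\, d_A\, \iota_\xi F_A \;\pm\; \iota_\xi\, d_A\, \iota_\xi\iota_\xi F_A ,
\end{equation*}
the sign being dictated by the parity fixed for $\iota_\xi$ (by degree reasons $\iota_\xi$ commutes with $d_A$ only up to $L_\xi^A$). The point of this form is that every bare $d_A$ can now be removed: since $\iota_\xi d_A F_A=0$ by Bianchi, the definition of $L_\xi^A$ gives $d_A\iota_\xi F_A = L_\xi^A F_A$, turning the first term into $\iota_\xi^2 L_\xi^A F_A$; and because $F_A$ is a two-form one has $\iota_\xi^3 F_A=0$, so $d_A\iota_\xi(\iota_\xi^2 F_A)=0$ and the second term becomes $L_\xi^A\iota_\xi^2 F_A$.

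The statement therefore collapses to showing that $\iota_\xi^2 L_\xi^A F_A$ and $L_\xi^A\iota_\xi^2 F_A$ cancel. The cleanest way to organise this — and the route I would actually write up — is to use the covariant Cartan relation of Lemma \ref{connectionindependence}, $[L_\xi^A,\iota_\xi]=\iota_{[\xi,\xi]}$, to rewrite $\iota_\xi L_\xi^A = L_\xi^A\iota_\xi - \iota_{[\xi,\xi]}$ and evaluate it on $\iota_\xi F_A$:
\begin{equation*}
\iota_\xi L_\xi^A \iota_\xi F_A = L_\xi^A \iota_\xi^2 F_A - \iota_{[\xi,\xi]}\iota_\xi F_A .
\end{equation*}
The claim is then exactly the identity $L_\xi^A\iota_\xi^2 F_A = \iota_{[\xi,\xi]}\iota_\xi F_A$ between two top-ghost-number scalars, which I would verify by a direct computation in a local frame. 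Expanding $L_\xi^A$ and $\iota_{[\xi,\xi]}$ in components, the contributions that carry a covariant derivative of $F_A$ assemble into the totally antisymmetric triple contraction $\xi^\mu\xi^\nu\xi^\rho (d_A F_A)_{\mu\nu\rho}$, which vanishes by Bianchi; the contributions that carry a derivative of $\xi$ reproduce, on both sides, precisely the contraction of $\iota_\xi F_A$ with $[\xi,\xi]^\mu=2\xi^\nu\partial_\nu\xi^\mu$, and hence coincide.

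The main obstacle is the sign and degree bookkeeping forced by the odd (degree-$1$) vector field $\xi$: one must keep the parity of $\iota_\xi$ fixed throughout, track the Koszul signs when the odd contractions pass one another and pass $d_A$, and check that the two surviving $\xi$-derivative terms appear so as to cancel rather than add. Once the convention is pinned down, the only genuine inputs beyond Cartan calculus and Lemma \ref{connectionindependence} are the Bianchi identity, entering through the vanishing of the totally antisymmetric triple contraction of $d_A F_A$, and the purely dimensional fact $\iota_\xi^3 F_A=0$ coming from $F_A$ being a two-form.
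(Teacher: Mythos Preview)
Your argument is correct and uses the same three inputs as the paper --- Lemma~\ref{connectionindependence}, the Bianchi identity, and $\iota_\xi^3 F_A=0$ --- but the endgame differs. After reducing to $\iota_\xi L_\xi^A\iota_\xi F_A = L_\xi^A\iota_\xi^2 F_A - \iota_{[\xi,\xi]}\iota_\xi F_A$ you finish by a local-frame computation of the remaining identity. The paper instead closes the argument coordinate-free by one extra algebraic observation: since $\xi$ has ghost degree~$1$, the contraction $\iota_\xi$ is an \emph{even} operator, so $\iota_{[\xi,\xi]}\iota_\xi = \iota_\xi\iota_{[\xi,\xi]}$. Applying this commutation relation to $F_A$ and expanding both sides via $\iota_{[\xi,\xi]}=[L_\xi^A,\iota_\xi]$ (using $\iota_\xi^3 F_A=0$ on the left and Bianchi on the right) yields directly $\iota_\xi d_A\iota_\xi\iota_\xi F_A = \iota_\xi\iota_\xi d_A\iota_\xi F_A$, which is the vanishing statement. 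In effect, the evenness of $\iota_\xi$ is precisely the missing algebraic fact that replaces your component check; it gives a shorter proof and, as a bonus, removes the sign bookkeeping you flagged as the main obstacle.
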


\begin{proof}
Observe that the contraction w.r.t. an odd vector field $\xi$ is an even operator, therefore
\begin{equation}\label{evencontra}
\iota_{[\xi,\xi]}\iota_\xi = \iota_\xi\iota_{[\xi,\xi]}
\end{equation}
Using Lemma \ref{connectionindependence}, which tells us that $[L_\xi^A,\iota_\xi]=[L_\xi,\iota_\xi]$, and since $(\iota_\xi)^3F_A=0$, together with the Bianchi identities $d_A F_A=0$, applying \eqref{evencontra} to $F_A$ we infer
\[
2\iota_\xi d_A\iota_\xi\iota_\xi F_A - \iota_\xi\iota_\xi d_A\iota_\xi F_A = 2 \iota_\xi\iota_\xi d_A \iota_\xi F_A - \iota_\xi d_A\iota_\xi\iota_\xi F_A
\]
leading to
\[
\iota_\xi d_A\iota_\xi\iota_\xi F_A = \iota_\xi\iota_\xi d_A \iota_\xi F_A
\]
which proves the statement.
\end{proof}

This will be used to prove the following
\begin{theorem}\label{Prop:genDiffeo}
Let $P\longrightarrow M$ be a principal $G$ bundle and let $A\in\mathcal{A}_P$ be a connection on it. Consider any degree 1 vector field $\xi$ on $M$, and any associated vector bundle $\mathcal{V}$ with typical fibre the $\mathfrak{g}$ module $V_\mathfrak{g}$. Denote by $\rho$ the representation on $V_\mathfrak{g}$. Let $c\in\Omega^0[1](M,\mathrm{ad}P)$ be a degree $1$ function with $\mathrm{ad}P$ the adjoint bundle to $P$, and define a vector field $Q$ on the graded manifold 
$${\mathcal{F}_M=\mathcal{A}_P\times\Omega^\bullet(M,\mathcal{V})\times\mathfrak{X}[1](M)\times\Omega^0[1](M,\mathrm{ad}P)}$$ 
through the assignment:
\begin{equation}
{\begin{array}{cc}
{Q}\, A=\iota_\xi F_A - d_\omega c & {Q}\, \Phi =L_\xi^\omega \Phi - \rho(c)\Phi \\\\
{Q}\, c = \frac{1}{2}\iota_\xi\iota_\xi F_A - \frac{1}{2}[c,c] & {Q}\, \xi = \frac{1}{2}[\xi,\xi]
\end{array}}
\end{equation}
Then $Q$ is cohomological, i.e. $[Q,Q]=0$.
\end{theorem}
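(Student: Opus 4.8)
The plan is to exploit that $Q$ is odd, so $[Q,Q]=2Q^2$ and the statement is equivalent to $Q^2=0$. Since $Q^2=\tfrac12[Q,Q]$ is itself an (even) derivation, it suffices to check $Q^2=0$ on the four generating fields $\xi$, $c$, $A$, $\Phi$. Two pieces are immediate and purely algebraic: $Q^2\xi=\tfrac12Q[\xi,\xi]=\tfrac12[[\xi,\xi],\xi]=0$ by the graded Jacobi identity for the odd vector field $\xi$ (the input recalled in the proof of Lemma~\ref{Lem:covariantL}), and likewise the internal part of the other checks, where $Q[c,c]=2[Qc,c]$ produces the Jacobi term $-[[c,c],c]=0$.

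The computational engine for the rest is the infinitesimal transformation of the curvature. As $Q$ is a derivation and $F_A=dA+\tfrac12[A,A]$, one has $QF_A=d_A(QA)$; inserting $QA=\iota_\xi F_A-d_A c$ and using $d_A^2c=[F_A,c]$ together with the Bianchi identity $d_AF_A=0$ expresses $QF_A$ through $L_\xi^A F_A$ and $[F_A,c]$. I would then compute $Q^2A$ and $Q^2c$ by substituting this formula and expanding. In both, the terms carrying a $c$ organise, via $d_A^2c=[F_A,c]$ and the Jacobi identity, into the combination required to cancel the contribution of the $[c,c]$ part of $Qc$. The purely diffeomorphic remainder collapses into expressions of the type $\iota_\xi L_\xi^A\iota_\xi F_A$ and $\iota_{[\xi,\xi]}(\,\cdot\,)$: here Lemma~\ref{iotaLiotaF} supplies the vanishing $\iota_\xi L_\xi^A\iota_\xi F_A=0$, while Lemma~\ref{connectionindependence} lets one replace $[L_\xi^A,\iota_\xi]$ by the connection-independent $\iota_{[\xi,\xi]}$ and reorganise the contractions covariantly (also using $\iota_\xi^3F_A=0$).

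The check on $Q^2\Phi$ is the crux, and the place where the particular shape of $Q$ is essential. Expanding $Q(L_\xi^A\Phi-\rho(c)\Phi)$, the pure-$c$ terms cancel because $\rho$ is a representation ($\rho(c)^2=\tfrac12\rho([c,c])$), the mixed $\xi$--$c$ terms cancel by the Leibniz rule for $L_\xi^A$ and Lemma~\ref{connectionindependence}, and the pure-$\xi$ terms produce three curvature contributions that must conspire to vanish: the anomaly $L_{\frac12[\xi,\xi]}^A-\tfrac12[L_\xi^A,L_\xi^A]=-\tfrac12[\iota_\xi\iota_\xi F_A,\,\cdot\,]$ of Lemma~\ref{Lem:covariantL}, the curvature term generated by varying the connection inside $L_\xi^A$ through the $\iota_\xi F_A$ part of $QA$, and the $\tfrac12\iota_\xi\iota_\xi F_A$ deliberately built into $Qc$ and fed back through $-\rho(c)\Phi$. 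That these three add up to zero is precisely the content of the theorem. Accordingly, I expect the main obstacle to be neither conceptual nor the invocation of the lemmas, but the sign and grading bookkeeping: with $\xi,c$ odd, $\iota_\xi$ even, and $d_A,\rho(c)$ odd, one must carefully separate the curvature pieces coming from the connection-dependence of $L_\xi^A$ (via the $-d_Ac$ and $\iota_\xi F_A$ in $QA$) from those coming from the graded commutator of $L_\xi^A$ with $\rho(c)$, and verify that every cross-term closes. The three lemmas are exactly the identities that make this closure automatic.
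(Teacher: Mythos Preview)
Your proposal is correct and follows essentially the same approach as the paper: you verify $Q^2=0$ field by field, invoking the Jacobi identity for $\xi$, the Bianchi identity for $A$, Lemma~\ref{iotaLiotaF} for $c$, and Lemma~\ref{Lem:covariantL} for $\Phi$, exactly as the paper does. Your narrative adds a useful conceptual gloss on how the three curvature contributions in $Q^2\Phi$ conspire to cancel, but the structure, the lemmas invoked, and the order of argument are the same.
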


\begin{proof}
We report the main steps of the various computations:
\begin{align*}
Q^2c =& \frac12 \iota_{[\xi,\xi]}\iota_\xi F_A - \frac12 d_A(\iota_\xi F_A - d_A c) - \frac12 [\iota_\xi\iota_\xi F_A,c]\\
=&\iota_\xi d_A\iota_\xi\iota_\xi F_A - \frac12\iota_\xi\iota_\xi d_A \iota_\xi - \iota_\xi\iota_\xi d_A\iota_\xi F_A  + \frac12\iota_\xi\iota_\xi[F_A,c] - [\iota_\xi\iota_\xi F_A,c]\\
=&-\iota_\xi L_\xi^A\iota_\xi F_A =0
\end{align*}
where we used Lemma \ref{iotaLiotaF}. Using Lemma \ref{Lem:covariantL} when applying $Q^2_{PL}$ to the field $\Phi$, we have
\begin{align*}
Q_{PL}^2\Phi=&\frac12L_{[\xi,\xi]}^A \Phi - L_\xi^AL_\xi^A\Phi + L_\xi^A[c,\Phi] +\iota_\xi[\iota_\xi F_A-d_Ac,\Phi] \\
-& [\iota_\xi F_A-d_Ac,\iota_\xi \Phi]+[c,L_\xi^A\Phi] -[c,[c,\Phi]] +\frac12[[c,c],\Phi]=\\
=&\frac12L_{[\xi,\xi]}^A \Phi - L_\xi^AL_\xi^A\Phi - \frac12[\iota_\xi\iota_\xi F_A,\Phi]=0
\end{align*}
whereas applying it to the connection $A$ we can use the Bianchi identity $d_AF_A=0$:
\begin{multline*}
Q^2A =\frac12\iota_{[\xi,\xi]} F_A  - \iota_\xi d_A \left(\iota_\xi  F_A  - d_A  c\right)+\frac12 d_A \left(\iota_\xi\iota_\xi  F_A  - [c,c]\right)-\left[\iota_\xi  F_A  - d_A  c,c\right]\\
=-\frac12\iota_\xi\iota_\xi d F_A  - \iota_\xi\left[A ,\iota_\xi  F_A \right] + \frac12\left[A ,\iota_\xi\iota_\xi  F_A \right]=-\frac12\iota_\xi\iota_\xi d_A   F_A  =0
\end{multline*}
We are left with $Q^2\xi=0$, which follows from the Jacobi identity in $(\mathfrak{X}(M),[,])$.
\end{proof}

This result tells us how to implement diffeomorphisms as gauge symmetries for different theories involving differential forms with values in some representation of the internal Lie algebra $\mathfrak{g}$. As we shall see below this is the case of the Palatini formulation of General Relativity.

\begin{remark}
Notice that Theorem \ref{Prop:genDiffeo} applies in the case of spin bundles as well, and we can take $\Phi$ to be a section of the associated spin bundle, thus extending the action of the diffeomorphisms in the tetrad formalism to spinors. This requires the replacement of $\mathsf{SO}(3,1)$ with its universal cover, $\mathsf{Spin}(3,1)$.
\end{remark}

\subsection{BV-extension of Palatini--Cartan--Holst theory}
In the literature, Piguet - Moritsch, Schweda and Sorella - and Baulieu and Bellon \cite{BB} all suggested a BRST operator for the PCH theory of gravity that can be summarised by the following:

\begin{proposition}[\cite{Piguet, Morschsor, BB}]\label{propPiguet}
The assignment
\begin{equation}\label{Qpig}{
\begin{aligned}
&\mathrm{s}\,e'=L_{\xi'} e' + [\theta', e']\\
&\mathrm{s}\,\omega'= L_{\xi'} \omega' + d_{\omega'}\theta'\\
&\mathrm{s}\,\xi'=\frac{1}{2}[\xi',\xi']\\
&\mathrm{s}\,\theta'=L_{\xi'} \theta' + \frac{1}{2}[\theta',\theta']
\end{aligned}}
\end{equation}
defines a cohomological vector field over 
$$F_{PP}\coloneqq \underbrace{\Omega_{nd}^1(M, \mathcal{V})}_{e'}\times\underbrace{\Omega^1(M,\Wedge{2}\mathcal{V})}_{\omega'}\times \underbrace{\mathfrak{X}[1](M)}_{\xi'} \times \underbrace{\Omega^0[1](M,\mathrm{ad}P)}_{\theta'}$$
with $\xi$ a vector field with ghost number $\mathrm{gh}(\xi)=1$ and $\theta$ a function with values in $\Lambda^2 V$ and ghost number $\mathrm{gh}(\theta)=1$. The Hamiltonian vector field of the BV-extension $S_{PP}$ of the Palatini--Cartan--Holst action by $s$ is a cohomological vector field in  $\mathcal{F}\coloneqq T^*[-1]F_{PP}$, thus the data  $(\mathcal{F}_{PP},\Omega_{PP},S_{PP},\check{s})$ defines a BV-manifold.
\end{proposition}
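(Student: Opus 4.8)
The plan is to reduce all three claims --- that $s$ is cohomological, that its cotangent lift $\check s$ is cohomological, and that $(\mathcal{F}_{PP},\Omega_{PP},S_{PP},\check s)$ is a BV-manifold --- to results already established in this section. The key observation is that Piguet's operator $s$ is the operator $Q$ of Theorem \ref{Prop:genDiffeo}, specialised to $G=\mathsf{SO}(3,1)$, $A=\omega'$ and $\Phi=e'$, rewritten after a redefinition of the gauge ghost. Using the odd Cartan identities $\iota_{\xi'} F_{\omega'} = L_{\xi'}\omega' + d_{\omega'}(\iota_{\xi'}\omega')$ and $L^{\omega'}_{\xi'} e' = L_{\xi'} e' + [\iota_{\xi'}\omega', e']$ (both checked directly from $L_{\xi'}=\iota_{\xi'}d-d\iota_{\xi'}$ for odd $\xi'$), one verifies that the substitution $\theta' = \iota_{\xi'}\omega' - c$ turns $Q\omega' = \iota_{\xi'} F_{\omega'} - d_{\omega'} c$ and $Qe' = L^{\omega'}_{\xi'} e' - \rho(c)e'$ into Piguet's $s\omega' = L_{\xi'}\omega' + d_{\omega'}\theta'$ and $se' = L_{\xi'} e' + [\theta', e']$, while $s\xi' = Q\xi'$ is untouched. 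First I would record this computation for the $e'$ and $\omega'$ sectors, which is immediate, and then confirm the same substitution is consistent in the ghost sector, i.e. that $s\theta'$ agrees with the transform of $Q(\iota_{\xi'}\omega' - c)$.

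Granting that $s$ is the coordinate expression of $Q$ under the graded diffeomorphism $(e',\omega',\xi',\theta')\mapsto(e',\omega',\xi',\,c=\iota_{\xi'}\omega'-\theta')$, the identity $[s,s]=0$ follows at once from $[Q,Q]=0$ (Theorem \ref{Prop:genDiffeo}), since squaring to zero is invariant under changes of coordinates. I expect the ghost sector to be the main obstacle: the redefinition mixes the gauge ghost with the diffeomorphism ghost and the connection, so confirming that it is a genuine change of variables --- equivalently, a direct check that $s^2\theta'=0$ --- rests on precisely the delicate cancellations packaged by Lemmas \ref{Lem:covariantL}, \ref{connectionindependence} and \ref{iotaLiotaF}. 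It is to bypass this grind that Theorem \ref{Prop:genDiffeo} was proven first.

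Next I would treat the cotangent lift. On $\mathcal{F}_{PP}=T^*[-1]F_{PP}$ the lift $\check s$ is the Hamiltonian vector field of the fibre-linear function $S_s=\int_M\big(\langle {e'}^\dag, se'\rangle + \langle {\omega'}^\dag, s\omega'\rangle + \langle {\xi'}^\dag, s\xi'\rangle + \langle {\theta'}^\dag, s\theta'\rangle\big)$. The general fact that the self-antibracket $(S_s,S_s)$ equals the fibre-linear Hamiltonian associated with the commutator $[s,s]$ then gives $(S_s,S_s)=0$ by the first part, which is exactly the statement $[\check s,\check s]=0$, so $\check s$ is cohomological.

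Finally, for the BV-manifold structure I would take $S_{PP}=S_{PCH}^0 + S_s$ with $\Omega_{PP}$ the canonical $(-1)$-symplectic form, and reduce everything to the Classical Master Equation along the lines of Theorem \ref{minimalBV}. Expanding the antibracket and using that $S^0_{PCH}$ is antifield-independent gives $(S_{PP},S_{PP}) = (S_s,S_s) + 2(S^0_{PCH},S_s)$, with the first term vanishing as above and the second equal to the variation $L_s S^0_{PCH}$ of the classical action under the symmetry. The remaining task is thus to check that $S^0_{PCH}$ is $s$-invariant: invariance under $\theta'$ holds because $\hat{T}_\gamma$ is an $\mathsf{SO}(3,1)$-invariant contraction and $e,F_\omega$ transform tensorially, while invariance under $L_{\xi'}$ holds because the integrand is a top form, so its Lie derivative is exact and integrates to zero on the closed $M$. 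This is the BRST-like situation of Theorem \ref{minimalBV}, whose conclusion then yields $\iota_{\check s}\Omega_{PP}=\delta S_{PP}$ and completes the proof.
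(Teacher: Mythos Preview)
Your proposal is correct and follows essentially the same route as the paper. In the paper, Proposition \ref{propPiguet} is quoted from the literature and is effectively established through Theorem \ref{Prop:PalatiniQ}: one first proves that the covariant operator $Q$ is cohomological (via Theorem \ref{Prop:genDiffeo}) and that $QS^0_{PCH}=0$, then exhibits the explicit generating function $G$ implementing the canonical transformation $\theta'=\iota_{\xi}\omega-c$ on the full $(-1)$-shifted cotangent bundle and checks $\phi_G^*S_{PP}=S_{PCH}$, so that Piguet's data inherit the BV-manifold structure. You use the same change of variables and the same identities (e.g.\ $L_\xi\omega+d_\omega\iota_\xi\omega=\iota_\xi F_\omega$) but organise the argument more modularly: you perform the redefinition at the base level to transport $[Q,Q]=0$ into $[s,s]=0$, and then appeal to the general fact that the fibre-linear Hamiltonian $S_s$ on $T^*[-1]F_{PP}$ satisfies $(S_s,S_s)=0$, rather than writing down the type-II generating function. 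Both packagings rely on the same computations; the paper's version has the slight advantage of making the symplectomorphism explicit at BV level (which is reused later), while yours cleanly separates the base nilpotency from the cotangent lift.
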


\begin{remark}
Observe that, in order to make sense of the formulas in Proposition \ref{propPiguet} we have to consider $\omega'$ as a global vector-valued one-form, instead of a connection. Moreover, the Lie derivatives are not \emph{covariant}. We show that there exists a version of this involving covariant expressions, which is \emph{close enough}, in the following sense.
\end{remark}

\begin{theorem}\label{Prop:PalatiniQ} 
The 4-tuple $\left(\mathcal{F}_{PCH}\coloneqq T^*[-1]\mathcal{F}_{min},\Omega_{PCH}^\gamma,Q, S^\gamma_{PCH}\right)$ defines a BV-manifold where $\mathcal{F}_{min}$ is defined as
\begin{equation}
\mathcal{F}_{min}\coloneqq \mathcal{F}^0_{PCH}\times\mathfrak{X}[1](M) \times \Omega^0[1](M,\mathrm{ad}P)\ni(e, \omega, \xi, c),
\end{equation}
$Q$ is the Hamiltonian vector field of $S_{PCH}^{\gamma}$, namely $\iota_Q\Omega^\gamma_{PCH} = \delta S^\gamma_{PCH}$, where
\begin{equation}\label{bvgravact}
{\begin{aligned}
S_{PCH}^\gamma&=\int\limits_M \hat{T}_\gamma\left[\frac12 e\wedge e\wedge  F_\omega +  \frac{\Lambda}{4} e^4 + \left(\iota_\xi  F_\omega  - d_\omega c \right)\omega^\dagger - \left([\iota_\xi ,d_\omega]e- [c,e]\right)e^\dagger\right]\\
&+\frac{1}{2}\int\limits_M\hat{T}_\gamma\left[ \left(\iota_\xi\iota_\xi  F_\omega  - [c,c]\right)c^\dagger\right] + \int\limits_M \frac12\iota_{[\xi,\xi]}\xi^\dagger, 
\end{aligned}}
\end{equation}
$F_\omega$ being the curvature of $\omega$, $L_\xi^\omega = [\iota_\xi ,d_\omega]$ the covariant Lie derivative along $\xi$ with connection $\omega$, and the standard $(-1)$-symplectic form $\Omega^\gamma_{PCH}$ is
\begin{equation}
\Omega_{PCH}^\gamma = \intl_M \hat{T}_\gamma\left[\delta \omega^\dag \delta\omega + \delta e^\dag \delta e + \delta c^\dag\delta c\right] + \iota_{\delta\xi}\delta\xi^\dag.
\end{equation} 

The BV operator for the PCH formalism is then given by:
\begin{equation}\label{Qpal}
{\begin{array}{cc}
{Q}\, \omega=\iota_\xi F_\omega - d_\omega c & {Q}\, e =L_\xi^\omega e - [c,e] \\\\
{Q}\, c = \frac{1}{2}\iota_\xi\iota_\xi F_\omega - \frac{1}{2}[c,c] & {Q}\, \xi = \frac{1}{2}[\xi,\xi].
\end{array}}
\end{equation}

Finally, there is a canonical transformation between the BV-manifold just described and $(\mathcal{F}_{PP},\Omega_{PP},S_{PP},\check{s})$ (cf. Proposition \ref{propPiguet}), i.e.
\begin{equation}
\phi\colon \mathcal{F}_{PCH} \longrightarrow \mathcal{F}_{PP},
\end{equation}
whose generating function is given by
\begin{equation}
G[c^\dag,\xi^\dag,e^\dag,\omega^\dag,e',\omega',\xi',\theta']\coloneqq \int_M \mathrm{Tr}\left[c^\dag(\iota_{\xi} \omega' - \theta') + \iota_{\xi}\xi^\dag{}' - e^\dag e' - \omega^\dag \omega'\right].
\end{equation}
\end{theorem}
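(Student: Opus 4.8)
The plan is to exploit the fact that the action \eqref{bvgravact} has exactly the minimal shape $S^\gamma_{PCH}=S^0_{PCH}+\langle\Phi^\dag,Q\Phi\rangle$ anticipated by Theorem \ref{minimalBV}, with multiplet $\Phi=(e,\omega,\xi,c)$ and with the antifield-linear terms being precisely the components of the minimal cohomological vector field \eqref{Qpal}. Since $\Omega^\gamma_{PCH}$ is the canonical $(-1)$-symplectic form on $T^*[-1]\mathcal{F}_{min}$, I would first read off $Q=X_{S^\gamma_{PCH}}$ from $\iota_Q\Omega^\gamma_{PCH}=\delta S^\gamma_{PCH}$: varying with respect to the antifields $\omega^\dag,e^\dag,c^\dag,\xi^\dag$ isolates the antifield-linear part of the action and reproduces \eqref{Qpal} verbatim, while varying with respect to the fields $e,\omega,c,\xi$ fixes the (unlisted) action of $Q$ on the antifields as the sum of the Euler--Lagrange flow of $S^0_{PCH}$ and the cotangent correction coming from $\langle\Phi^\dag,Q\Phi\rangle$. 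This simultaneously verifies condition (1) of Definition \ref{BVdef} and identifies the restriction of $Q$ to the minimal sector.

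Next I would reduce condition (2), $[Q,Q]=0$, to the Classical Master Equation $(S^\gamma_{PCH},S^\gamma_{PCH})=0$, which is equivalent by nondegeneracy of $\Omega^\gamma_{PCH}$. Writing $S=S^0+S_{sym}$ I would treat the three pieces separately: $(S^0,S^0)=0$ holds trivially because $S^0_{PCH}$ carries no antifields; $(S_{sym},S_{sym})$ collapses to the antifield pairing $\langle\Phi^\dag,Q^2\Phi\rangle$ on the minimal fields, which vanishes by Theorem \ref{Prop:genDiffeo} applied with $A=\omega$, $\Phi=e$ and $\rho(c)\,\cdot=[c,\cdot]$, so that \eqref{Qpal} is literally the vector field of that theorem; and the cross term $2(S^0,S_{sym})$ equals $2\,Q\,S^0_{PCH}$, the variation of the classical action along the minimal symmetries. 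The latter vanishes because $\hat T_\gamma$ is an $SO(3,1)$-invariant pairing (built from $\epsilon$ and $\eta$), so the internal ($c$) variation is a manifest symmetry, while the diffeomorphism ($\xi$) variation built from the covariant Lie derivative $L^\omega_\xi$ differs from the ordinary one only by an internal rotation generated by $\iota_\xi\omega$; ordinary diffeomorphism invariance of the integral of a top-form then closes the argument on a closed $M$. Together these give the CME, hence $[Q,Q]=0$ and the BV-manifold claim.

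Finally, for the canonical transformation I would read $G$ as the generating function of a $(-1)$-symplectomorphism in the mixed polarization in which it is written, depending on the $\mathcal{F}_{PCH}$-antifields $c^\dag,\xi^\dag,e^\dag,\omega^\dag$ and the $\mathcal{F}_{PP}$-fields $e',\omega',\xi',\theta'$. I would compute the defining relations $\Phi=\delta G/\delta\Phi^\dag$ and $\Phi'^\dag=\delta G/\delta\Phi'$ to extract, at the level of base fields, the identifications $e'=e$, $\omega'=\omega$, $\xi'=\xi$ together with the covariantising ghost shift $\theta'=c+\iota_\xi\omega$ (up to sign), and at the level of antifields the induced linear cotangent map. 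I would then verify that this is exactly the redefinition turning Piguet's non-covariant operator $s$ of \eqref{Qpig} into the covariant $Q$ of \eqref{Qpal}; equivalently, I would check $\phi^*S_{PP}=S^\gamma_{PCH}$, after which $\phi_*Q=\check s$ is automatic because a canonical transformation intertwines the Hamiltonian vector fields of pulled-back functions.

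The step I expect to be the main obstacle is this last one: untangling $G$ into an explicit, well-defined field/antifield transformation and confirming that the ghost covariantisation $\theta'=c+\iota_\xi\omega$ conjugates the \emph{full} operators, and not merely their base parts, requires tracking how the $\iota_\xi$-dependence of $G$ propagates into the antifield sector and keeping control of the many signs of the odd symplectic calculus. The only other place needing a genuine (though standard) computation is the gauge-invariance identity $Q\,S^0_{PCH}=0$ with the covariant Lie derivative, which I would handle via the internal-rotation argument sketched above.
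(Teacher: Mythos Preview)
Your proposal is correct and follows essentially the same route as the paper: invoke Theorem~\ref{Prop:genDiffeo} for $Q^2=0$, verify $QS^0_{PCH}=0$, apply Theorem~\ref{minimalBV} to obtain the minimal BV-manifold, and then check the canonical transformation by extracting the explicit field/antifield map from $G$ (indeed $\theta'=\iota_\xi\omega-c$, with the antifield shift $\omega^\dag{}'=\omega^\dag-\iota_\xi c^\dag$) and computing $\phi_G^*S_{PP}=S^\gamma_{PCH}$. The only cosmetic difference is that the paper establishes $QS^0_{PCH}=0$ by a direct expansion rather than by your internal-rotation reduction to ordinary diffeomorphism invariance; both arguments are standard and equivalent.
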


\begin{remark}[Graded canonical transformations]\label{cantransf}
Observe that in the graded setting the generating function of a canonical trasformation might incur in some nontrivial sign conventions.

Denote by $(p,q)$ and $(P,Q)$ two Darboux charts of an odd symplectic manifold, e.g. $\Omega=\delta p \delta q$ and $|p|=|q|+1$. Now, a \emph{generating function of type I} is a function $F(q,Q)$ such that 
$$
p\delta q = P\delta Q + \delta F(q,Q) = P\delta Q + \delta q\pard{F}{q} + \delta Q\pard{F}{Q}
$$
and since $q\,\delta p = (-1)^{|q|+1}\delta q\, p$ we have the equations
\begin{equation}
p= - (-1)^{|q|}\pard{F}{q};\ \ \ \ P =  (-1)^{|Q|}\pard{F}{Q}.
\end{equation}
Alternatively, we may consider the class of \emph{generating functions of type II}, i.e. functions $G(q,P)$ that satisfy (we use the convention $\delta = \delta q \pard{}{q} + \delta P\pard{}{P}$)
$$
p\delta q = P\delta Q + \delta\left((-1)^{|P|+1}P\,Q + G \right) = (-1)^{|P|+1}\delta P\,Q + \delta q \pard{G}{q} + \delta P\pard{G}{P}
$$
and the associated equations become
\begin{equation}
p= - (-1)^{|q|}\pard{F}{q};\ \ \ \ Q =  (-1)^{|P|}\pard{F}{P}.
\end{equation}

In particular, in this class we have the generating function for the identity map, which can be easily shown to be 
\begin{equation}
G_{id} = (-1)^{|P|} P\, q.
\end{equation}

Observe that one might get rid of (some of) the signs by means of derivation from the right and by using them to define the operator $\delta$. Hereinafter we will assume that $\delta = \sum_\phi \delta\phi \pard{}{\phi}$ and we will consider its total degree to be $1$.
\end{remark}

\begin{proof}[Proof of Theorem \ref{Prop:PalatiniQ}]
$Q$ encodes the symmetries of the classical PCH action, indeed $QS^0_{PCH}=0$ :
\begin{multline*}
QS_{PCH}^0=\frac12\int\hat{T}_\gamma\left[ 2[\iota_\xi,d_\omega]e e  F_\omega  - 2[c,e]e F_\omega  - ee d_\omega (\iota_\xi  F_\omega  - d_\omega c)\right] \\
=\frac12\int\hat{T}_\gamma\Bigg[-2d_\omega e \iota_\xi e  F_\omega  - 2d_\omega e e \iota_\xi  F_\omega  - 2d_\omega \iota_\xi e e  F_\omega  - ee d_\omega\iota_\xi  F_\omega \\ 
+ ee[ F_\omega ,c] - 2[c,e] e F_\omega  + 2ed_\omega\iota_\xi e  F_\omega  + e\iota_\xi e d_\omega  F_\omega  - 2 d_\omega e e \iota_\xi  F_\omega \\ 
 - 2d_\omega \iota_\xi e e  F_\omega  + 2 d_\omega e e\iota_\xi  F_\omega  - \langle ee,\mathrm{ad}_c  F_\omega \rangle - \langle \mathrm{ad_c}(ee), F_\omega \rangle\Bigg] =0
\end{multline*}

On the other hand, $Q$ is cohomological because of Theorem \ref{Prop:genDiffeo}, where $\bigwedge^2V\simeq\mathfrak{g}=\mathfrak{so}(3,1)$, $A=\omega$ and $(V,\eta)$ clearly bears a representation of $\mathfrak{g}$. Since the symmetries are BRST-like we can use Theorem \ref{minimalBV} to construct $S_{PCH}^{BV}$, which solves the Classical Master Equation w.r.t. $\Omega_{BV}^\gamma$.

Now, let us turn to the generating function 
$$G\coloneqq \int_M \mathrm{Tr}\left[c^\dag(\iota_{\xi} \omega' - \theta') + \iota_{\xi}\xi^\dag{}' - e^\dag e' - \omega^\dag \omega'\right],
$$
from which we can deduce the following map (cf. Remark \ref{cantransf}, where we set $q=(c^\dag,\xi,e,\omega^\dag),\ P=(e^\dag{}',\omega',\xi^\dag{}',\theta')$)
\begin{equation}\begin{cases}
e'=e \\ 
\omega'=\omega \\
\xi'=\xi \\
\theta'=\iota_{\xi }\omega - c
\end{cases}
\begin{cases}
e^\dag{}' = e^\dag \\
\omega^\dag{}'  =  \omega^\dag  - \iota_{\xi}c^\dag  \\
\xi^\dag{}' = \xi^\dag + c^\dag\omega_\bullet\\
\theta^\dag{}' = - c^\dag 
\end{cases}\end{equation}
where $c^\dag\omega_\bullet$ denotes a top-form valued one-form (so contractions act as  $\iota_X(c^\dag \omega_\bullet) = c^\dag\iota_X\omega$). We can pullback $S_{PP}$ along $\phi_G$ as:
\begin{multline}
\phi_G^*S_{PP}=\phi_G^*\intl_M \hat{\mathrm{T}}_\gamma\Bigg[\frac12 e'e'F_{\omega'} +\frac{\Lambda}{4}e'{}^4 + e^\dag{}' \left( L_{\xi'}e' + [\theta',e']\right) + \\
+ \omega^\dag{}'\left( L_{\xi'}\omega' + d_{\omega '}\theta'\right)  + \theta^\dag{}'\left(\frac12[\theta',\theta'] + L_{\xi}\theta'\right)\Bigg] + \frac12\iota_{[\xi',\xi']} \xi^\dag{}'=\\
=\intl_M\hat{\mathrm{T}}_\gamma\Bigg[\frac12 eeF_\omega +\frac{\Lambda}{4}e^4 + e^\dag\left(L_\xi e + [\iota_\xi\omega,e] - [c,e]\right) + \omega^\dag\left(L_\xi \omega + d_\omega \iota_\xi\omega - d_\omega c\right) +  \\
+ c^\dag\left(\iota_\xi L_\xi\omega +\iota_\xi d_\omega\iota_\xi\omega - \frac12 [\iota_\xi\omega,\iota_\xi\omega] -\frac12[c,c] -\iota_\xi d\iota_\xi \omega + \frac12\iota_{[\xi,\xi]}\omega\right)\Bigg] + \frac12 \iota_{[\xi,\xi]}\xi^\dag =\\
=\intl_M \hat{\mathrm{T}}_\gamma\left[\frac12 eeF_\omega +\frac{\Lambda}{4}e^4 + e^\dag\left(L_\xi^\omega e - [c,e]\right) + \omega^\dag\left(\iota_\xi F_\omega - d_\omega c\right) + \frac12c^\dag\left(\iota_\xi\iota_\xi F_\omega - [c,c]\right)\right] + \frac12\iota_{[\xi,\xi]}\xi^\dag 
\end{multline}
And the last line coincides with $S_{PCH}$. We used the fact that $L_\xi \omega + d_\omega \iota_\xi\omega=\iota_\xi d\omega + [\omega,\iota_\xi\omega] = \iota_\xi F_\omega$, as well as  (observe the difference between $d_\omega$ and $d$)
$$\iota_\xi L_\xi\omega +\iota_\xi d_\omega\iota_\xi\omega - \frac12 [\iota_\xi\omega,\iota_\xi\omega] -\iota_\xi d\iota_\xi \omega + \frac12\iota_{[\xi,\xi]}\omega = \frac12\iota_\xi\iota_\xi F_\omega$$
to conclude the argument.
\end{proof}

\begin{remark}
Observe that one can map $c\mapsto -c$ and promote this to a canonical transformation by replacing, in the generating function of the identity, the sign reversal term $F=c^\dag c $.
\end{remark}

For the sake of clarity we summarise the nature of the fields, anti-fields, ghosts and anti-ghosts in $\mathcal{F}_{PCH}=T^*[-1]\mathcal{F}_{min}$ in the following table:
\begin{equation}
\begin{array}{c|c|c|c|c}
\text{Field} & \Omega^\bullet(M) & \Lambda^\bullet V & \text{Ghost} &\text{Total Degree} \\\hline
\omega & 1 & 2 & 0& 3\\
e & 1 & 1 & 0 & 2\\
c & 0 & 2 & 1 & 3\\
\xi & n.a. & n.a. & 1 & 1\\
\omega^\dag & 3 & 2 & -1 & 4\\
e^\dag & 3 & 3 & -1 & 5\\
c^\dag & 4 & 2& -2 & 4\\
\xi^\dag & 1\otimes 4 & n.a. & -2 & 3
\end{array}
\end{equation}
The ghost field $\xi$ is a vector field on $M$, and its dual anti-ghost is a one form with values in top forms, i.e. with $\chi\in\Omega^1(M)[-2]$ and $v$ a top form:
\begin{equation}
\xi^\dag=\chi v.
\end{equation} 

\section{BV-BFV approach to Palatini--Cartan--Holst theory}

We are now ready to establish whether the BV theory \eqref{bvgravact} obtained by minimally extending the Palatini--Cartan Holst action does satisfy the BV-BFV axioms or not.

\begin{theorem}\label{Theo:HolstNOGO}
The BV data $(\mathcal{F}_{PCH}, S_{PCH}, Q, \Omega_{BV}^\gamma)$ on a $(3+1)$-dimensional pseudo-Riemannian manifold $M$ with boundary $\partial M$ does not yield a BV-BFV theory for any value of $\gamma$, including the limiting case $\gamma \rightarrow \infty$, which yields the usual Palatini--Cartan formulation of gravity.
\end{theorem}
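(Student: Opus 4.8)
The plan is to run the prototypical construction of Section~\ref{b(f)vgauge} and to show that it breaks at the pre-symplectic step. Concretely, I would (i) vary $S_{PCH}^\gamma$ and read off the pre-boundary one-form $\widetilde{\alpha}$ from the boundary terms, (ii) form the pre-boundary two-form $\widetilde{\omega}:=\delta\widetilde{\alpha}$ on the space $\widetilde{\mathcal{F}}_{\partial M}$ of restrictions of fields to $\partial M$, and (iii) analyse $\ker\widetilde{\omega}$. The theorem is then established by proving that $\ker\widetilde{\omega}$ is \emph{not} a subbundle: its rank jumps, so $\widetilde{\omega}$ fails to be pre-symplectic and the reduction \eqref{smoothred} defining $\mathcal{F}^\partial_{\partial M}$ is unavailable, which by the construction of Section~\ref{b(f)vgauge} means no BV-BFV pair is induced.

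For step (i) I would vary \eqref{bvgravact} using $\delta F_\omega=d_\omega\delta\omega$, the Leibniz rule for $d_\omega$, and the identity $\iota_\xi d_\omega=L_\xi^\omega-d_\omega\iota_\xi$. Every covariant derivative contributes a boundary piece, so that schematically $\widetilde{\alpha}=\intl_{\partial M}\hat{T}_\gamma[\frac12\,e\,e\,\delta\omega+\omega^\dag\iota_\xi\delta\omega+\omega^\dag\delta c+e^\dag\iota_\xi\delta e+\cdots]$, where the degree-zero term reproduces the classical one-form behind \eqref{classicalboundaryform}. Applying $\delta$ yields $\widetilde{\omega}$, whose degree-zero block is $\intl_{\partial M}\hat{T}_\gamma[e\,\delta e\,\delta\omega]$, controlled by Lemma~\ref{Lemma:kerWe}, while the ghost/antifield sector produces blocks such as $\delta\omega^\dag\,\iota_\xi\delta\omega$, $\omega^\dag\,\iota_{\delta\xi}\delta\omega$, $\delta e^\dag\,\iota_\xi\delta e$ and $e^\dag\,\iota_{\delta\xi}\delta e$. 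The terms carrying the contractions $\iota_{\delta\xi}$ and the background $\iota_\xi$ are the novel feature absent from the purely classical analysis.

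For step (iii) I would solve $\iota_X\widetilde{\omega}=0$ for $X=(\dot e,\dot\omega,\dot c,\dot\xi,\dot\omega^\dag,\dot e^\dag,\dot c^\dag,\dot\xi^\dag)$. In coordinates adapted to $\partial M$ split $\xi$ into its tangential part and its transverse component $\xi^n$. The kernel equations couple $\dot\omega$, $\dot e$ and the antifield variations through the field-dependent maps $\mathsf{W}_e^{(p,k)}$ of Lemma~\ref{Lemma:kerWe} twisted by contraction with $\xi$; the transverse piece inserts a normal leg that feeds extra components of $F_\omega$ and of the antifields into the pairing. The decisive computation is that the background value $\xi^n$ enters these maps as a coefficient, so that the dimension of the solution space depends on whether $\xi^n=0$ or $\xi^n\neq0$. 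Comparing the rank of $\ker\widetilde{\omega}$ on the locus $\{\xi^n=0\}$ with its rank on $\{\xi^n\neq0\}$ — equivalently, exhibiting a single tangent vector that lies in $\ker\widetilde{\omega}$ exactly when $\xi^n=0$ — shows that the rank is not constant, which is the obstruction.

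The main obstacle is precisely this kernel analysis: inside the $\hat{T}_\gamma$-paired algebra one must isolate which combinations of the eight variations are genuinely null and track how $\xi^n$ enters. I expect the rank jump to be governed by the interplay of the surjectivity of $\mathsf{W}_e^{(1,2)}$ with the transverse contraction, and not by the pairing $\hat{T}_\gamma$ itself; this is why the phenomenon is insensitive to $\gamma$ and persists in the limit $\gamma\to\infty$, where $\hat{T}_\gamma$ degenerates to $\epsilon_{ijkl}$ but the transverse-component mechanism is unchanged. As a consistency check, setting $\xi^n\equiv0$ — the boundary-preserving choice of Section~\ref{Sect:pres} — removes exactly the offending directions and restores constant rank, in agreement with Theorems~\ref{Theo:PCHpres1} and~\ref{Theo:PCHpres2}.
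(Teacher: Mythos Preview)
Your strategy matches the paper's: compute $\widetilde\alpha$ from the variation of $S_{PCH}^\gamma$, take $\widetilde\varpi=\delta\widetilde\alpha$, write out the kernel equations, and show the rank is not constant because of the transverse ghost component $\xi^n$. That is exactly how the paper proceeds, and your consistency check against Section~\ref{Sect:pres} is also the paper's diagnostic.

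Two points where the paper is sharper than your sketch and which you should be aware of when carrying out step~(iii). First, the paper isolates \emph{two} distinct singular mechanisms, and the one you emphasise is only the second. The $\delta\omega$-equation in the kernel reads $(X_e)\wedge e=\mathcal{E}$, where $\mathcal{E}$ collects antifield and $\xi$-dependent terms; this is governed by $\mathsf{W}_e^{(1,1)}$, which by Lemma~\ref{Lemma:kerWe} is \emph{injective} but not surjective (12 unknowns against 18 equations), so solvability forces relations on $\mathcal{E}$ that are polynomial in odd fields --- already a failure of constant rank. Your proposal invokes only the surjectivity of $\mathsf{W}_e^{(1,2)}$; that governs the companion equation $(X_\omega)\wedge e=\Omega$ and indeed leaves a free piece $(X_v)\in\mathrm{ker}\,\mathsf{W}_e^{(1,2)}$, which then meets the separate kernel equations $(X_\omega)\xi^n=0$ and $\iota_\xi(X_\omega)\xi^n=0$ coming from the $\delta c^\dag_n$ and $\delta\chi_n$ variations. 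These last equations, with $(X_v)$ otherwise unconstrained, are the singular conditions you have in mind. Either mechanism suffices for the theorem, but the first one is the more basic obstruction and is easy to miss if you only track $\mathsf{W}_e^{(1,2)}$.

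Second, your phrasing ``compare the rank on $\{\xi^n=0\}$ with its rank on $\{\xi^n\neq0\}$'' is heuristically right but should be reformulated: $\xi^n$ is an odd coordinate, so the precise statement is that the kernel is cut out by equations of the form (even vector)$\cdot\xi^n=0$, which do not define a subbundle of $T\widetilde{\mathcal F}_{\partial M}$. The paper says exactly this. Your observation that $\hat T_\gamma$ plays no role, so the obstruction is $\gamma$-independent and survives $\gamma\to\infty$, is correct and is the paper's reason for the parenthetical clause in the statement.
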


\begin{proof}
The variation of $S_{PCH}$ reads as follows:
\begin{equation}\begin{aligned}
\delta S_{PCH}^{BV}&= \int\limits_{\partial M} \hat{T}_\gamma\left[-\frac12 ee\delta  \omega  + \delta  \omega  (\iota_\xi  \omega ^\dagger) + \delta c  \omega ^\dagger + \delta e(\iota_\xi e^\dagger) + (\iota_{\delta\xi} e)e^\dagger \right]\\
&+ \int\limits_{\partial M}\hat{T}_\gamma\left[ - (\iota_\xi\delta e)e^\dagger - \frac{1}{2}\delta  \omega  (\iota_\xi\iota_\xi c^\dagger)\right]  + \int\limits_{\partial M}(\iota_{\delta\xi}\chi) \iota_\xi v + \int\limits_M\text{Bulk Terms}
\end{aligned}\end{equation}

In fact, the variation of the $\xi$-ghost part is computed as:
\begin{align}\notag
\delta \int\limits_M \frac12\iota_{[\xi,\xi]}\xi^\dag =& \intl_M\delta\left(\iota_\xi d \iota_\xi - \frac12\iota_\xi\iota_\xi d\right)\xi^\dag - \frac12 \iota_{[\xi,\xi]}\delta\xi^\dag \\\notag
=&\intl_M\iota_{\delta\xi}\left(d\iota_\xi\xi^\dag - \iota_\xi d\xi^\dag\right) - \iota_\xi d\iota_{\delta\xi}\xi^\dag - \frac12 \iota_{[\xi,\xi]}\delta\xi^\dag \\
=&-\intl_M\iota_{\delta\xi} L_\xi\left(\chi v\right) + \frac12 \iota_{[\xi,\xi]}\delta\xi^\dag + \intl_{\partial M} \iota_{\delta\xi}\chi \iota_\xi v
\end{align}

If we denote by $\xi^n$ the transversal part of $\xi$ with respect to the boundary, and with $v^\partial$ a volume form on the boundary, we may rewrite $\iota_\xi v=v^\partial\xi^n=-\xi^n v^\partial$, since $\mathrm{dim}(\partial M) =3$.

To obtain the pre-boundary one form $\widetilde{\alpha}$ we must consider the restriction of the fields to the boundary and their possible residual transversal components. With an abuse of notation, the restriction of the fields to the boundary will be denoted by the same symbol, whereas an apex ${}^n$ will be assigned to the transversal components. For instance, we will write $\iota_\xi\phi\big|_{\partial M} = \iota_{\xi^\partial}\phi^\partial + \phi_n\xi^n\equiv \iota_{\xi}\phi + \phi_n\xi^n$ by renaming the restrictions to the boundary $\phi^\partial \equiv \phi$ where $\phi$ is any field, and $\xi^\partial\equiv\xi$. We obtain
\begin{equation}\begin{aligned}\label{PCHpreone}
\widetilde{\alpha}&= \int\limits_{\partial M} \hat{T}_\gamma\left[ - \frac12 ee\delta  \omega + \delta  \omega  (\iota_\xi  \omega ^\dagger) +\delta \omega \,\omega^\dag_n \xi^n + \delta c\,  \omega ^\dagger -\delta e\, e^\dag_n \xi^n - \delta e(\iota_\xi e^\dagger)\right]\\    
&+ \int\limits_{\partial M}\hat{T}_\gamma\left[ -\delta( e_n\xi^n)e^\dag -  \delta(\iota_\xi e)e^\dagger - \delta  \omega\, (\iota_\xi c_n^\dagger) \xi^n \right] - \xi^n\iota_{\delta \xi} \chi\,v^\partial
\end{aligned}
\end{equation}
and we may compute the pre-boundary 2-form $\widetilde\varpi=\delta\widetilde\alpha$ to be $(\rho=1,2,3,n)$
\begin{equation}\label{PCHpretwo}
\begin{aligned}
\widetilde\varpi=\int\limits_{\partial M}&\hat{T}_\gamma\Bigg[- \delta e e \delta  \omega - \delta  \omega (\omega ^\dagger_\rho \delta\xi^\rho) + \delta \omega\delta\omega^\dag_\rho \xi^\rho  + \delta c\delta  \omega ^\dagger  + \delta e \delta(e^\dagger_n \xi^n) \\ 
&    + \delta( e_n \xi^n) \delta e^\dag +  \delta(e_a  e^\dag)\delta\xi^a  + \delta  \omega\, \delta\xi^n \iota_\xi c_n^\dagger -\delta\omega\,\xi^n \iota_{\delta\xi}c^\dag_n - \delta\omega\,\xi^n \iota_\xi\delta c^\dag_n \Bigg] +\\
&   +\left(\xi^n\delta\xi^n\delta\chi_n - \delta\xi^n\delta\xi^n \chi_n - \delta\xi^n\chi_a\delta\xi^a + \xi^n \delta\chi_a\delta\xi^a\right)v^\partial
\end{aligned}
\end{equation}
The kernel of $\widetilde{\varpi}$ is defined by the equations:
\begin{subequations}\label{KerPCHinit}\begin{align}
&(X_{\omega^\dag})=0\\
&(X_c)=\iota_\xi(X_\omega) \\\label{palkerXxi}
&(X_{\xi^\rho})e_\rho + (X_{e_n})\xi^n=0 
\end{align}\end{subequations}
together with
\begin{align}\label{palkerXomega}
(X_\omega)\wedge e =\Omega \\ \label{palkerXe}
(X_e)\wedge e = \mathcal{E}
\end{align}
where
\begin{equation}
\Omega\coloneqq \left[(X_{\xi^n})e^\dag_n   + (X_{e^\dag_n})\xi^n + \iota_{(X_\xi)}e^\dag\right] 
\end{equation}
\begin{equation}\label{Ecoeff}
\mathcal{E}\coloneqq \left[(X_{\omega^\dag_n})\xi^n  - (X_{\xi^\rho})\omega^\dag_\rho + (X_{\xi^n})\iota_\xi c^\dag_n - \iota_{(X_\xi)} c^\dag_n \xi^n - \iota_\xi(X_{c^\dag_n})\xi^n\right]
\end{equation}
with $\Omega\in \Omega^2(\partial M,\bigwedge{}^{\!\! 3}\mathcal{V})$ and $E\in \Omega^2(\partial M,\bigwedge{}^{\!\! 2}\mathcal{V})$. In addition we have
\begin{align}\label{edagn}
\hat{T}_\gamma\left[ (X_e)e^\dag_n - (X_\omega)\omega^\dag_n - (X_{e^\dag})e_n + (X_\omega) c^\dag_{na}\xi^a\right] + \\\notag
- \left(2(X_{\xi^n})\chi_n + (X_{\chi_n})\xi^n +(X_{\xi^a})\chi_a\right)v^\partial=0\\\label{edaga}
\hat{T}_\gamma\left[ (X_e)e^\dag_a - (X_\omega)\omega_a^\dag - (X_\omega)c^\dag_{na}\xi^n - (X_{e^\dag})e_a\right] + \\\notag
-\left( (X_{\xi^n})\chi_a  + (X_{\chi_a})\xi^n\right)v^\partial= 0
\end{align}
where the latter is valid for all $a=1,2,3$. Finally, for all $\rho=1\dots 4$
\begin{subequations}\label{nonsing}
\begin{align}\label{Xomegacheck}
(X_\omega)\xi^n =0\\\label{Xecheck}
(X_e)\xi^n=0\\ \label{Xomegacheck2}
\iota_\xi(X_\omega)\xi^n=0\\
(X_\xi)^\rho\xi^n=0 \label{Xxicheck}
\end{align}
\end{subequations}

Equation \eqref{palkerXe} is singular. As a matter of fact, counting the number of unknowns (the $(X_e)_{a}^{i}$ are 12, independent fields) against the number of equations (the $\delta \omega_{a}^{ij}$ are 18 independent variations) it is easy to gather that the system admits solutions only when relations among the $\mathcal{E}$ coefficients \eqref{Ecoeff} are imposed. On the other hand, such relations are singular for they involve polynomial expressions of odd fields only.

A more direct way to see this is by using the splitting $\omega = \tom + v$ of Remark \ref{splittingremark}, where $v\in\mathcal{W}\equiv\mathrm{ker}\mathsf{W}_e^{(1,2)}$. Then, equation \eqref{palkerXe} splits into an invertible part (the coefficient of $\delta \tom$),  and a singular part (the coefficient of $\delta v$). In fact, from the splitting 
$$\mathcal{U}\equiv\Omega^1(\partial M, \Wedge{2}\mathcal{V}) = \mathcal{W} \oplus \mathcal{C}$$
we induce the dual splitting $\mathcal{U^*}=\mathcal{W}^*\oplus \mathcal{W}^0$, with $\mathcal{W}^0$ the annihilator of $\mathcal{W}$ and the identification $\mathcal{W}^0\simeq \mathrm{im}\mathsf{W}^{(1,1)}\simeq e\wedge \Omega^1(\partial M, \mathcal{V})$. We can then project $E$ onto $\mathcal{C}^*$, and solve for $\X{\widetilde\omega}$. Observe that $\X{\tom}$ is proportional to $\xi^n$. However, the equation coming from the vanishing of the coefficient of $\delta v$ enforces a singular relation in the $\mathcal{E}$'s. 

Moreover, consider equations \eqref{Xomegacheck} and \eqref{Xomegacheck2}, and use again the splitting. Since $\X{\tom}\propto\xi^n$ the respective parts of \eqref{Xomegacheck} and \eqref{Xomegacheck2} are automatically satisfied, while $\X{v}\xi^n$ and $\iota_\xi \X{v}\xi^n$ are singular, for $\X{v}$ is a free parameter.

Thus, the kernel of $\widetilde{\varpi}$ does not define a sub-bundle of the tangent space to the space of fields, and symplectic reduction cannot be performed.
\end{proof}

\begin{remark}
This result is hinting at the fact that, for the BV-extended theory to be compatible with the boundary, we would need to require some conditions on the fields (e.g. $v=0$, that is $\omega\big|_{\partial M}\in \mathcal{C}$). 
\end{remark}

There are examples of classically equivalent theories that fail to be equivalent at the BV level when boundaries are included, see for instance \cite{CS3}. This is another nontrivial example of a similar phenomenon.

\begin{remark}
Observe that the choice of BV extension we make here is natural. As a matter of fact, looking at the three-dimensional analogue of Palatini--Cartan theory\footnote{The Holst term does not apply in dimension three, but this fact has no bearing on the argument we are making here.} one can show not only that it is strongly equivalent to BF theory \cite{CSS}, but that the same general BV extension we present in Theorem \ref{Prop:genDiffeo} and Theorem \ref{Prop:PalatiniQ}, extends to boundaries, corners and vertices \cite{CanSch}.
\end{remark}

To overcome the problem encountered with Theorem \ref{Theo:HolstNOGO}, one could try to look at a different BV input, following Corollary \ref{corollaryBP}, in the spirit of what was done for the reparametrisation invariant Jacobi theory. This is done in section \ref{Sect:pres}.

\begin{remark}
Another option might be to find a correction of the BV-form by a boundary term that changes the kernel of the pre-boundary two-form. Note that one problem are equations \eqref{Xomegacheck}, \eqref{Xomegacheck2}, which put extra, singular constraints on the kernel of $\widetilde{\varpi}$ in the omega directions, whereas from the classical analysis \cite{CS4} we expect $X_\omega$ to be the kernel of the map $e\wedge$. One might look for a boundary term for the BV-form that removes precisely these extra constraints.
\end{remark}

\section{Boundary-preserving BV-data}\label{Sect:pres}
In this section we will adopt a different point of view when looking at the symmetry distribution for the Palatini--Holst formulation of GR. Differently from what we have done in the previous section, and in the case of GR in the Einstein--Hilbert formalism \cite{CS1}, we will now consider the symmetry distribution given by the action of all the spacetime diffeomorphism of the manifold with boundary $M$ that preserve the boundary submanifold $\partial M$. At the infinitesimal level this means considering all vector fields in $M$ that are tangent at the boundary. In other words, we will assume that $\xi^n\big|_{\partial M}=0$. Let us denote by $\mathfrak{X}(M,\partial M)$ the space of such vector fields, the new space of fields we will consider is simply given by $\mathcal{F}_{PCH^\partial}\coloneqq T^*[-1]\mathcal{F}_{BP}$ where
\begin{equation}\label{PCHboundpresfields}{
\mathcal{F}_{BP}\coloneqq\underbrace{\Omega_{nd}^1(M, \mathcal{V})}_{e}\times\underbrace{\mathcal{A}_P}_{\omega}\times \underbrace{\mathfrak{X}[1](M,\partial M)}_\xi \times \underbrace{\Omega^0[1](M,\mathrm{ad}P)}_{c}}
\end{equation}

Using the same BV-extended action \eqref{bvgravact} we analysed in section \ref{Sect:BVPCH} we can consider the BV-manifold obtained by choosing the boundary-preserving space of fields \eqref{PCHboundpresfields}. We obtain the following

\begin{theorem}\label{Theo:PCHpres1}
The BV-manifold defined by the data $(\mathcal{F}_{PCH^\partial},S_{PCH}, Q,\Omega_M)$ satisfies the CMR axioms and yields an exact BV-BFV theory.
\end{theorem}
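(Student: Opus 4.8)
The plan is to repeat the boundary reduction carried out in the proof of Theorem \ref{Theo:HolstNOGO}, but now setting $\xi^n\big|_{\partial M}=0$ throughout, and to check that the resulting pre-boundary two-form has a kernel of constant rank so that symplectic reduction is smooth. Concretely, I would first recompute the pre-boundary one-form $\widetilde\alpha$ starting from the variation $\delta S_{PCH}$, observing that every term in \eqref{PCHpreone} that carries an explicit factor of $\xi^n$ now drops out. In particular the problematic contributions $\delta\omega\,\omega^\dag_n\xi^n$, $\delta e\,e^\dag_n\xi^n$, $\delta(e_n\xi^n)e^\dag$ and $\delta\omega\,(\iota_\xi c^\dag_n)\xi^n$, together with the $\xi^n\iota_{\delta\xi}\chi\,v^\partial$ term, all vanish, leaving a substantially simpler $\widetilde\alpha$. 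Taking $\widetilde\varpi=\delta\widetilde\alpha$ then yields a pre-boundary two-form in which the entire transversal sector has been suppressed.

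\textbf{Key steps.} The decisive point is what happens to the kernel equations \eqref{KerPCHinit}--\eqref{nonsing}. With $\xi^n=0$ the singular equation \eqref{palkerXe}, $(X_e)\wedge e=\mathcal{E}$, loses the offending $\xi^n$-proportional terms in $\mathcal{E}$ of \eqref{Ecoeff}; what survives is governed by $\mathsf{W}^{(1,1)}_e$, which by Lemma \ref{Lemma:kerWe} is injective for $p=k=1$, so this equation now determines $(X_e)$ uniquely rather than imposing a singular relation among odd fields. Likewise equation \eqref{palkerXomega}, $(X_\omega)\wedge e=\Omega$, is handled via the surjectivity of $\mathsf{W}^{(1,2)}_e$ (again Lemma \ref{Lemma:kerWe}), fixing $(X_\omega)$ up to the expected $\mathcal{W}=\mathrm{ker}\,\mathsf{W}^{(1,2)}_e$ ambiguity. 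Crucially, the whole system \eqref{nonsing}, which was the source of the obstruction through the free parameter $(X_v)$, becomes vacuous once $\xi^n=0$, so no singular constraints remain. I would then verify that the remaining kernel equations---$(X_{\omega^\dag})=0$, $(X_c)=\iota_\xi(X_\omega)$, and \eqref{palkerXxi} which now reads $(X_{\xi^\rho})e_\rho=0$ and hence $(X_\xi)=0$ by nondegeneracy of $e$---cut out a subbundle of constant rank. Having established that $\mathrm{ker}(\widetilde\varpi)$ is a subbundle, I invoke the construction recalled in Section \ref{b(f)vgauge}: the symplectic reduction $\mathcal{F}^\partial_{PCH^\partial}=\widetilde{\mathcal{F}}\big\slash\mathrm{ker}(\widetilde\varpi)$ is smooth, $\omega^\partial=\underline{\widetilde\varpi}$ is $0$-symplectic, and by Proposition \ref{CMRprop} the bulk field $Q$ projects to a cohomological $Q^\partial$ that is Hamiltonian for an induced boundary action $S^\partial$, so that $(\mathcal{F}_{PCH^\partial},\mathcal{F}^\partial_{PCH^\partial})_{\pi}$ is the desired (exact) BV-BFV pair.

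\textbf{Main obstacle.} The hard part will be establishing that the kernel has genuinely \emph{constant} rank across the space of fields, not merely that it is nonempty at a generic point: this is exactly the property that failed in Theorem \ref{Theo:HolstNOGO}, so I must check carefully that after deleting all $\xi^n$-terms no residual free odd parameter (the analogue of $(X_v)$) re-enters through the ghost or anti-field sector. I would verify this directly by solving the kernel equations in the splitting $\omega=\widetilde\omega+v$ of Remark \ref{splittingremark} and confirming that each fibre of $\mathrm{ker}(\widetilde\varpi)$ has the same dimension. A secondary point requiring care is exactness: one should exhibit $\widetilde\alpha$ (and hence $\alpha^\partial$) explicitly so that $\omega^\partial=\delta\alpha^\partial$ descends to the quotient, guaranteeing the BFV-manifold is exact in the sense of Definition \ref{BFVdef}.
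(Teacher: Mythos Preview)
Your approach is essentially the same as the paper's and is correct in spirit: set $\xi^n|_{\partial M}=0$, recompute $\widetilde\alpha$ and $\widetilde\varpi$, and verify that the kernel is a subbundle. Your identification of $(X_\xi)=0$, $(X_{\omega^\dag})=0$, $(X_e)=0$ via injectivity of $\mathsf W_e^{(1,1)}$, and $(X_\omega)\in\mathcal W=\ker\mathsf W_e^{(1,2)}$ is right, and you correctly observe that the singular block \eqref{nonsing} disappears.

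There is, however, one genuine gap in your kernel enumeration: you never analyse the equation coming from $\delta\xi^a$. In the paper this reads $(X_{e^\dag})e_a=(X_\omega)_a\omega^\dag$ (after using $(X_e)=0$), and it does two things you miss. First, it leaves the three components $(X_{e^\dag})^{abc}$ entirely free, so the kernel is strictly larger than the $\mathcal W$-ambiguity in $(X_\omega)$ alone. Second, it couples the remaining components $(X_{e^\dag})^{nab}$ to $(X_\omega)$, which forces the introduction of the combined field $\underline e^\dag_a=e_a e^\dag$ in order to describe the quotient cleanly. Without this step your description of $\ker(\widetilde\varpi)$ is incomplete and the reduced space of boundary fields would be wrong. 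Finally, the paper closes by checking $\iota_{\mathbb E^\dag}\widetilde\alpha=\iota_{\mathbb\Omega}\widetilde\alpha=0$ so that $\widetilde\alpha$ is basic; you flag this as a ``secondary point'', but it is what actually delivers exactness and should be done explicitly.
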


\begin{proof}
Since we required $\xi^n\big|_{\partial M}=0$ the space of pre-boundary fields will only contain the restrictions of fields to the boundary, without the residual normal directions coming from the contractions with the ghost vector field, i.e. 
\[
(\iota_\xi e)\big|_{\partial M} = (e_a\xi^a + e_n\xi^n)\big|_{\partial M}= (e_a\xi^a)\big|_{\partial M}
\]
Starting from the expression for the pre-boundary one-form found in Theorem \ref{Theo:HolstNOGO} and omitting the components of the fields along the normal direction we obtain:
\begin{equation}\label{preoneBS}
\widetilde{\alpha}=\intl_{\partial M}  \hat{T}_\gamma\left[\frac12 ee\delta\omega-\iota_\xi\delta\omega\omega^\dag + \delta c\omega^\dag -\iota_{\delta\xi}ee^\dag\right]
\end{equation}
and
\begin{equation}
\widetilde{\varpi}=\intl_{\partial M}\hat{T}_\gamma\left[\delta e e \delta \omega - \iota_{\delta\xi}\delta\omega\omega^\dag - \iota_\xi\delta\omega\delta\omega^\dag + \delta c\delta\omega^\dag +\iota_{\delta\xi}\delta e e^\dag + \iota_{\delta\xi}e \delta e^\dag\right]
\end{equation}
Collecting the various terms along the field directions we get the following equations
\begin{align}\label{Xxizero}
\delta e^\dag\colon& \X{\xi}^ae_a =0 \\\label{Xomegadagzero}
\delta c\colon& \X{\omega{^\dag}}=0
\end{align}
Solving \eqref{Xxizero} we obtain that $\X{\xi}=0$, which together with \eqref{Xomegadagzero} will simplify the remaining kernel equations to yield:
\begin{align}\label{Xomegapres}
\delta e\colon& T_\gamma[\X{\omega}]e  = 0\\\label{Xepres}
\delta \omega\colon& \X{e} e =0\\\label{Xedagpres}
\delta\xi^a \colon& \X{e^\dag}e_a = \X{e}e_a^\dag + \X{\omega}_a\omega^\dag\\\label{Xcpres}
\delta \omega^\dag\colon&\X{c}=\iota_\xi\X{\omega}
\end{align}
Equations \eqref{Xomegapres} and \eqref{Xepres} are the statement that $T_\gamma[\X{\omega}]\in\mathrm{ker}\mathsf{W}_e^{(1,2)}$ and $\X{e}\in\mathrm{ker}\mathsf{W}_e^{(1,1)}$, while the remaining equations \eqref{Xedagpres} and \eqref{Xcpres} are determined by the values of $\X{e}$ and $\X{\omega}$. 

Recalling Lemma \ref{Lemma:kerWe} and using the splitting defined in Remark \ref{splittingremark}, we have that 
\begin{equation}
\X{e}=0;\ \ \  \X{\omega}\in \mathcal{C}.
\end{equation}

It is possible to see that some of the components of $\X{e^\dag}$ are free: expand the left hand side of equation \eqref{Xedagpres} in the basis $\{e_\mu\}$ to get the simplification
\[
\X{e^\dag}^{\mu\nu\rho} e_\mu e_\nu e_\rho e_a = \X{e^\dag}^{ncb}e_ne_c e_b e_a
\]
The components $\X{e^\dag}^{abc}$, for a total of $3$ functions, do not appear anywhere in the kernel equations and are therefore free. The residual condition on the $\X{e^\dag}$ reads then $\forall c=1,2,3$
\begin{equation}\label{tempedag}
\X{e^\dag}^{nab} e_ne_ae_be_{c} = \X{\omega}^{\mu\nu}\omega_{c}^\dag{}^{\rho\sigma} e_\mu e_\nu e_\rho e_\sigma
\end{equation}

We define a new field by $\underline{e}^\dag_a = e_ae^\dag=e_ae^\dag{}^{nbc}e_ne_be_c$, which can be thought of as $\underline{e}^\dag=e\otimes e^\dag\in\Omega^1(\partial M)\otimes \Omega^{\text{top}}(\partial M,\bigwedge^4\mathcal{V})$, and the equation reads
\[
\X{\underline{e}^\dag}_a=\X{\omega}_a\omega^\dag
\]

The kernel of the pre-boundary 2-form $\varpi$ is then generated by the vector fields:
\begin{subequations}\label{vertPCHpres}\begin{align}
\mathbb{E}^\dag{}=&\X{e^\dag{}^{abc}}\pard{}{e^\dag{}^{abc}}\\
\mathbb{\Omega}=&T_\gamma[\X{v}]\pard{}{T_\gamma[v]} + \iota_\xi\X{\omega}\pard{}{c} + \X{\omega}_a\omega^\dag\pard{}{\underline{e}^\dag_a},
\end{align}\end{subequations}
where we used the splitting $T_\gamma[\X{\omega}] = T_\gamma[\X{\tom} + \X{v}]$, with $T_\gamma[\X{v}]$ in $\mathrm{Ker}(\mathsf{W}_e^{(2,1)})$, and it is then a smooth subbundle of $T\widetilde{\mathcal{F}}$. It is a matter of an easy check to show 
\[
\iota_{\mathbb{E}^\dag}\widetilde{\alpha}=\iota_{\mathbb{\Omega}}\widetilde{\alpha}=0
\]
and prove that $\widetilde{\alpha}$ is basic.
\end{proof}

We can push our understanding of the boundary structure a little bit further and obtain the explicit expressions of the BFV data.
\begin{theorem}\label{Theo:PCHpres2}
The BV-BFV pair $(\mathcal{F}_{PCH^\partial},\mathcal{F}_{PCH^\partial}^\partial)_{\pi_{PCH^\partial}}$ is such that the space of boundary fields $\mathcal{F}_{PCH^\partial}^\partial$ is the exact symplectic manifold 
\begin{equation}
\mathcal{F}^\partial_{PCH^\partial}=\mathcal{F}^{0\partial}_{PCH} \times T^*\left(\mathfrak{X}[1](\partial M)\times\Omega^0(\partial M,\iota^*\mathrm{ad}P)\right),
\end{equation}
the surjective submersion $\pi_{PCH^\partial}$ reads
\begin{equation}\label{PCHmapcovariant}
\pi_{PCH^\partial}\colon\begin{cases}
\be=- \frac12 e\wedge e + \iota_{\xi}\omega^\dag\\
\bxi{} =\xi\\
\bod=\omega^\dag\\
\bc=c - \iota_\xi\mathrm{v}\\
\bedl=e\otimes e^\dag - \mathrm{v}\otimes\omega^\dag\\
\bom=\omega - \mathrm{v}
\end{cases}
\end{equation}
where $T_\gamma[\mathrm{v}]\in\mathrm{ker}(\mathsf{W}^{(1,2)}_e)$, and $\bom\in\mathcal{A}_{\iota^*P}^{red}\coloneqq\qsp{\mathcal{A}_{\iota^*P}}{\mathrm{Ker}(\mathsf{W}^{(1,2)}_e)}$ is a connection on the boundary. 
In this chart, the boundary 2-form reads
\begin{equation}
\varpi^\partial = \intl_{\partial M}  \mathrm{Tr}\left[\delta\be\delta \bom + \delta \bc\delta\bod + \iota_{\delta\bxi{}}\delta\bedl\right]
\end{equation}
and the boundary action:
\begin{equation}\label{PCHpresBaction}
S^\partial=\intl_{\partial M} \mathrm{Tr} \left[ - \bc d_{\bom}\be +\iota_{\bxi}(F_{\bom} - \Lambda \be) \be  + \frac12\left(\iota_{\bxi}\iota_{\bxi}F_{\bom} - [\bc,\bc]\right)\bom^\dag - \frac12\iota_{[\bxi,\bxi]}\bedl \right]
\end{equation}

\end{theorem}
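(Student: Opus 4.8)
The plan is to make the symplectic reduction from the proof of Theorem \ref{Theo:PCHpres1} completely explicit. Having shown there that $\mathrm{ker}(\widetilde\varpi)$ is the smooth subbundle generated by $\mathbb{E}^\dag$ and $\mathbb{\Omega}$ of \eqref{vertPCHpres}, I would produce a maximal set of functions on $\widetilde{\mathcal{F}}$ that are constant along both generators. These invariants are exactly the combinations collected in $\pi_{PCH^\partial}$ of \eqref{PCHmapcovariant}, and once they are identified the three claims (the product/cotangent structure of $\mathcal{F}^\partial_{PCH^\partial}$, the form $\varpi^\partial$, and the action $S^\partial$) follow by substitution.

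First I would integrate the two kernel flows. The generator $\mathbb{E}^\dag$ only shifts the purely tangential antifield components $e^\dag{}^{abc}$, so any invariant may depend on $e^\dag$ only through the contraction $\bedl_a=e_ae^\dag$ introduced before \eqref{vertPCHpres}; this is where the bundle $\Omega^1(\partial M)\otimes\Omega^{\mathrm{top}}(\partial M,\Wedge{4}\mathcal{V})$ housing $\bedl$ comes from. The generator $\mathbb{\Omega}$ acts through an arbitrary $w\in\mathcal{W}=\mathrm{ker}\,\mathsf{W}_e^{(1,2)}$ as $\omega\mapsto\omega+w$, $c\mapsto c+\iota_\xi w$, $\bedl_a\mapsto\bedl_a+w_a\omega^\dag$, while leaving $e,\omega^\dag,\xi,\chi$ fixed. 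Writing $\omega=\bom+\mathrm{v}$ with $\bom$ in the chosen complement $\mathcal{C}$ and $T_\gamma[\mathrm{v}]\in\mathcal{W}$, one reads off the $\mathbb{\Omega}$-invariants $\bom=\omega-\mathrm{v}$, $\bc=c-\iota_\xi\mathrm{v}$ and $\bedl-\mathrm{v}\otimes\omega^\dag$, together with the manifestly invariant $\bxi=\xi$, $\bod=\omega^\dag$ and $\be=-\tfrac12 e\wedge e+\iota_\xi\omega^\dag$; these reproduce \eqref{PCHmapcovariant}. A dimension count shows nothing else survives, so $\pi_{PCH^\partial}$ is a surjective submersion, and sorting the invariants by the pairings they will enter ($\be$ with $\bom$, $\bc$ with $\bod$, $\bxi$ with $\bedl$) gives the announced splitting $\mathcal{F}^{0\partial}_{PCH}\times T^*(\mathfrak{X}[1](\partial M)\times\Omega^0(\partial M,\iota^*\mathrm{ad}P))$, whose degree-$0$ factor is the classical $T^*\Omega^1_{nd}$ of \eqref{classicalboundaryform}. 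Here I would use the surjectivity statements of Lemma \ref{Lemma:kerWe} to check that $\bom$, although only defined modulo $\mathcal{W}$, enters $S^\partial$ and $\varpi^\partial$ only through combinations ($d_\bom\be$, $\iota_\bxi F_\bom\be$, \dots) that descend to $\mathcal{A}^{red}_{\iota^*P}$, exactly as in the classical analysis of \cite{CS4}.

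For the two-form I would rewrite the pre-boundary one-form \eqref{preoneBS} in terms of the invariants, exhibiting $\widetilde\alpha=\pi_{PCH^\partial}^*\alpha^\partial$ with $\alpha^\partial=\intl_{\partial M}\mathrm{Tr}[\be\,\delta\bom+\bc\,\delta\bod+\iota_{\bxi}\delta\bedl]$: up to graded reordering signs, the terms $\tfrac12 ee\,\delta\omega$ and $-\iota_\xi\delta\omega\,\omega^\dag$ of \eqref{preoneBS} assemble into $\be\,\delta\bom$, while $\delta c\,\omega^\dag$ and $-\iota_{\delta\xi}e\,e^\dag$ become $\bc\,\delta\bod$ and $\iota_\bxi\delta\bedl$ after reduction. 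Since $\widetilde\alpha$ was shown to be basic in Theorem \ref{Theo:PCHpres1}, $\varpi^\partial=\delta\alpha^\partial$ is the claimed symplectic form and the BFV-manifold is exact.

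The remaining task is $S^\partial$. By Proposition \ref{CMRprop} the bulk $Q$ of \eqref{Qpal} descends to a cohomological $Q^\partial$ that is $\varpi^\partial$-Hamiltonian, so I only need to identify its generator: I would compute the Hamiltonian vector field of the right-hand side of \eqref{PCHpresBaction} against $\varpi^\partial$ and match it, component by component, with the projection of \eqref{Qpal} to the boundary fields, equivalently verifying $\iota_{Q^\partial}\varpi^\partial=\delta S^\partial$. The degree-$0$ part is then recognised as the smeared constraints of Corollary \ref{corollaryBP}, with $-\bc\,d_\bom\be$ playing the role of $\mathbf{L}_{\bc}$ and $\iota_\bxi(F_\bom-\Lambda\be)\be$ that of $\mathbf{J}_{\iota_\bxi\bem}$, while the quadratic-in-ghost terms $\tfrac12(\iota_\bxi\iota_\bxi F_\bom-[\bc,\bc])\bom^\dag$ and $-\tfrac12\iota_{[\bxi,\bxi]}\bedl$ encode the structure functions of that algebra; the master equation $(S^\partial,S^\partial)=0$ is automatic once $Q^\partial$ is realised as the projection of the cohomological $Q$. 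The hard part throughout is the graded sign bookkeeping coming from the $(-1)$-shifted cotangent fibres and the degree-$1$ ghosts $\xi,c$, and in particular checking that the $\iota_\bxi\iota_\bxi F_\bom$ and $\iota_{[\bxi,\bxi]}$ contributions close correctly; this is precisely where I would invoke the covariant identities of Lemmas \ref{Lem:covariantL}, \ref{connectionindependence} and \ref{iotaLiotaF}, together with the well-definedness modulo $\mathcal{W}$ guaranteed by Lemma \ref{Lemma:kerWe}.
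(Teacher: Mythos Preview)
Your proposal is correct and broadly parallel to the paper's argument: both identify the reduction map by following the kernel flows of \eqref{vertPCHpres} (you phrase this as ``finding invariants'', the paper literally integrates the ODEs $\dot c=-\iota_\xi\mathrm v(0)$, $\dot{\underline e}{}^\dag=-\mathrm v(0)\omega^\dag$), and both verify that $\widetilde\alpha$ is the pullback of a one-form in the reduced variables. The paper does this in two steps, first landing on an intermediate chart $(\te,\tom,\tc,\tom^\dag,\txi,\tedl)$ and then applying a further symplectomorphism $\be=-\tfrac12\te\te+\iota_{\txi}\tom^\dag$ to reach Darboux form; you collapse the two steps, which is fine.

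The one genuine methodological difference is in how $S^\partial$ is obtained. You propose to take the stated formula \eqref{PCHpresBaction} as an ansatz and verify $\iota_{Q^\partial}\varpi^\partial=\delta S^\partial$ by computing the Hamiltonian vector field and matching it against the projection of the bulk $Q$. The paper instead uses Roytenberg's AKSZ-type shortcut: it introduces the graded Euler vector field $\widetilde E=\int c\,\partial_c+\xi\,\partial_\xi-\omega^\dag\partial_{\omega^\dag}-e^\dag\partial_{e^\dag}$ on the pre-boundary space and sets $\widetilde S\coloneqq\iota_{\widetilde Q}\iota_{\widetilde E}\widetilde\varpi$, which automatically satisfies $\iota_{\widetilde Q}\widetilde\varpi=\delta\widetilde S$; only afterwards does it check that the ansatz $S^\partial$ pulls back to $\widetilde S$ under $\pi_{PCH^\partial}$. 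Your route is more direct but computationally heavier, since you must produce all components of $Q^\partial$ and match them; the Euler-field trick bypasses this by producing $\widetilde S$ in closed form from data already at hand. Either way the identities of Lemmas \ref{Lem:covariantL}--\ref{iotaLiotaF} are needed exactly where you say, and the paper closes with an explicit verification that $(Q^\partial)^2\be=0$, which in your approach is subsumed in the appeal to Proposition \ref{CMRprop}.
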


\begin{proof}
The proof goes through by finding an explicit expression for the map $\pi_{PCH^\partial}$  through the flows of the vertical vector fields \eqref{vertPCHpres}. We can use $\mathbb{E}^\dag$ to set $e^\dag{}^{abc}e_ae_be_c$ to zero, and $T_\gamma\left[\mathbb{\Omega}\right]$ to set $T_\gamma[v]=0$.

We get:
\[
\X{c}=\iota_\xi\X{\omega} \Longrightarrow \dot{c}=-\iota_\xi\mathrm{v}(0) \Longrightarrow c(t)=c(0) - \iota_\xi\mathrm{v}(0)t
\]
since $\omega(t)=\omega(0) + (X_\omega)t$ can be fixed at time $t=1$ to $\omega(1)=\tilde\omega$, implying that $(X_\omega)=\mathrm{v}(0)$. Similarly
\[
\X{\underline{e}^\dag}=\iota_\xi\X{\omega}\omega^\dag\Longrightarrow \dot{\underline{e}^\dag}=-\mathrm{v}(0)\omega^\dag \Longrightarrow \underline{e}^\dag(t)=\underline{e}^\dag(0) - \mathrm{v}(0)\omega^\dag t
\]
from which we can set at time $t=1$ the transformations of the fields: $\widetilde{\underline{e}}^\dag=\underline{e}^\dag(1)=e\otimes e^\dag - \mathrm{v}\otimes\omega^\dag$ and $\widetilde{c}=c(1)=c - \iota_\xi\mathrm{v}$. Notice that $\mathrm{v}\in T_\omega\mathcal{A}_{P_{\partial M}}$ is a global one-form.

The field $T_\gamma[\omega]$ is transformed as in the classical case \cite{CS4}, while $\xi$ and $e$ are projected \emph{verbatim}. Pre-composing with the obvious restriction map $\mathcal{F}_{PCH^\partial}\longrightarrow \widetilde{\mathcal{F}}_{PCH^\partial}$ yields the temporary expression
\begin{equation}\label{PCHmap1}
\underline{\pi_{PCH^\partial}}\colon\begin{cases}
\te=e\\
\txi{} =\xi\\
\tom^\dag=\omega^\dag\\
\tc=c - \iota_\xi\mathrm{v}\\
\tedl=e\otimes e^\dag - \mathrm{v}\otimes\omega^\dag\\
\tom=\omega-\mathrm{v}
\end{cases}
\end{equation}
and in the dynamical basis $\{e_\mu\}$ we have $\tom\in\mathcal{C}_{(1,2)}$. The correct ansatz for the boundary one-form in this coordinate chart reads
\begin{equation}
\underline{\alpha^\partial} = \intl_M \hat{T}_\gamma\left[-\te \te\delta \tom + \delta \tc\tom^\dag -\delta\tom\iota_{\txi{}}\tom^\dag - \iota_{\delta\txi{}}\tedl\right]
\end{equation}
as we can easily check that $\underline{\pi_{PCH^\partial}}^*\underline{\alpha^\partial}=\widetilde{\alpha}$. We can then introduce new fields redefinitions through a symplectomorphism $\phi: \mathcal{F}_{PCH^\partial}^\partial\longrightarrow \mathcal{F}_{PCH^\partial}^\partial$ as

\begin{equation}
\begin{array}{cc}
\be=-	\frac12\te\wedge\te + \iota_{\txi{}}\tom^\dag & \bom = \tom\\
\bc=\tc & \bod=\tom^\dag\\
\bxi=\txi{} & \bedl=\tedl 
\end{array}
\end{equation}

With a simple computation, using that $\omega=\bom+\mathrm{v}$, it is possible to check that the boundary one form
\begin{equation}
\alpha^\partial \coloneqq \intl_{\partial M} \mathrm{Tr}\left[\be\delta \bom + \delta \bc\bom^\dag - \iota_{\delta\bxi{}}\bedl\right]
\end{equation}
satisfies 
\begin{equation}
\widetilde{\alpha}=\pi_{PCH^\partial}^*\alpha^\partial
\end{equation}
where $\pi_{PCH^\partial}=\phi\circ\underline{\pi_{PCH^\partial}}$.

To compute the boundary action we adopt the same procedure that was shown in \cite{CS1}, following Roytenberg \cite{Royt}, namely we will compute the pre-boundary action first, by computing $\widetilde{S}\coloneqq\iota_{\widetilde{Q}}\iota_{\widetilde{E}}\widetilde{\varpi}$, where the pre-boundary graded Euler vector field $\widetilde{E}$ reads\footnote{Observe that we only consider the ghost number, and not the grading in $\Omega^\bullet(M)$ or $\bigwedge^\bullet V$.}:
\begin{equation}
\widetilde{E}=\intl_{\partial M} c\pard{}{c} + \xi\pard{}{\xi} - \omega^\dag\pard{}{\omega^\dag} - e^\dag\pard{}{e^\dag}
\end{equation}
This yields the explicit expression
\[
\widetilde{S}=\intl_{\partial M} \hat{T}_\gamma\left[\frac12 c d_\omega(ee) + \frac12 F_\omega\iota_\xi(ee) +\iota_\xi d_\omega c\omega^\dag - \frac12[c,c]\omega^\dag - \frac12\iota_\xi\iota_\xi F_\omega\omega^\dag - \frac12 \iota_{[\xi,\xi]}\underline{e}^\dag\right]
\]
and it can be checked that $\iota_{\widetilde{Q}}\widetilde{\varpi}=\delta\widetilde{S}$. Then, using the ansatz 
\[
S^\partial=\intl_{\partial M} \hat{T}_\gamma\left[ - \bc d_{\bom}\be +\iota_{\bxi}F_{\bom} \be + \frac12\left(\iota_{\bxi}\iota_{\bxi}F_{\bom} - [\bc,\bc]\right)\bom^\dag - \frac12\iota_{[\bxi,\bxi]}\bedl \right]
\]
we compute
\begin{multline}
\pi_{PCH^\partial}^*{S}^\partial=\intl_{\partial M}\hat{T}_\gamma\Bigg[\frac12c d_{\tom}ee - \frac12\iota_\xi\mathrm{v}d_{\tom}(ee) -(c-\iota_\xi\mathrm{v})d_{\tom}( \iota_\xi\omega^\dag) - \iota_\xi F_{\tom}\left(\frac12ee - \iota_\xi \omega^\dag\right) \\
+\frac12\iota_\xi\iota_\xi F_{\tom} - \frac12[c,c]\omega^\dag  +[c,\iota_\xi\mathrm{v}]\omega^\dag - \frac12[\iota_\xi\mathrm{v},\iota_\xi\mathrm{v}]\omega^\dag -\frac12\iota_{[\xi,\xi]}\underline{e}^\dag +\frac12\iota_{[\xi,\xi]}\mathrm{v}\omega^\dag\Bigg].
\end{multline}
We then use that (omitting $\hat{T}_\gamma$)
$$\intl_{\partial M}\iota_\xi\mathrm{v} d_{\tom}\iota_\xi\omega^\dag=-\intl_{\partial M}\iota_\xi d_{\tom}\iota_\xi{\mathrm{v}}\omega^\dag = -\intl_{\partial M}\frac12\iota_{[\xi,\xi]}{\mathrm{v}}\omega^\dag + \frac12\iota_\xi\iota_\xi d_{\tom}{\mathrm{v}}\omega^\dag$$
together with 
$$\intl_{\partial M}-cd_{\tom}\iota_\xi\omega^\dag +[c,\iota_\xi\mathrm{v}]\omega^\dag= \intl_{\partial M}\iota_\xi d_{\tom}c\omega^\dag +[c,\iota_\xi\mathrm{v}]\omega^\dag = \intl_{\partial M}\iota_\xi d_\omega c\omega^\dag$$
 and 
$$\frac12(\iota_\xi\iota_\xi d_{\tom}\mathrm{v} + [\iota_\xi\mathrm{v},\iota_\xi\mathrm{v}] + \iota_\xi \iota_\xi d_{\tom}\tom)\omega^\dag=\frac12\iota_\xi\iota_\xi F_\omega\omega^\dag$$
to conclude that $\pi_{PCH^\partial}^*{S}^\partial=\widetilde{S}$, concluding the proof.
\end{proof}

Observe that the Hamiltonian vector field of ${S}^\partial$ is given by the following:
\begin{equation}
\begin{array}{cc}
Q^\partial\be=-[\bc,\be] - d_{\bom} \iota_{\bxi}\be + \frac12d_{\bom}\iota_{\bxi}\iota_{\bxi}\bom^\dag & Q^\partial\bom=\iota_{\bxi}F_{\bom} - d_{\bom}\bc\\
\\
Q^\partial\bc=\frac12\left(\iota_{\bxi}\iota_{\bxi}F_{\bom} - [\bc,\bc]\right) & Q^\partial\bom^\dag = - d_{\bom}\be - [\bc,\bom^\dag] \\
\\
Q^\partial\bxi=\frac12[\bxi,\bxi] & Q^\partial\bedl_a=-(F_{\bom})_a\be - \iota_\xi (F_{\bom})_a\bom^\dag + L_\xi\bedl_a
\end{array}
\end{equation}
The general Theorems presented in \cite{CMR1} ensure us that $[Q^\partial,Q^\partial]=0$, but it is worth the while unpacking these expressions.

Since $e\wedge e = \iota_{\bxi}\bom^\dag - \be$ we can easily compute:
\begin{equation}\begin{aligned}
eQ^\partial e =& \frac12\iota_{[\bxi,\bxi]}\bom^\dag + \iota_{\bxi}Q^\partial\bom^\dag - Q^\partial\be\\
=&\iota_{\bxi}d_{\bom}\iota_{\bxi}\bom^\dag - \frac12\iota_{\bxi}\iota_{\bxi}d_{\bom}\bom^\dag - \frac12d_{\bom}\iota_{\bxi}\iota_{\bxi}\bom^\dag+\\
&\iota_{\bxi}\left(- d_{\bom}\be - [\bc,\bom^\dag]\right) +[\bc,\be] - d_{\bom} \iota_{\bxi}\be + \frac12d_{\bom}\iota_{\bxi}\iota_{\bxi}\bom^\dag\\
=&\frac12 L_{\bxi}^{\bom}(e\wedge e) - \frac12 [c,e\wedge e] = e\left( L_{\bxi}^{\bom} e - [c,e] \right)
\end{aligned}
\end{equation}
showing that the action of $Q$ on the (non-Darboux) field $e$ is essentially the action of diffeomorphism twisted by the $\mathfrak{so}(3,1)$ action, as we expect. 

Moreover, with a simple computation we get
\begin{equation}
Q^\partial(Q^\partial\be)=\frac12\left\{d_{\bom}\iota_{\bxi}L_{\bxi}^{\bom}\iota_{\bxi} \bom^\dag + \left[F_{\bom},\iota_{\bxi}\iota_{\bxi}\iota_{\bxi}\bom^\dag\right] + [\iota_{\bxi}F_{\bom},\iota_{\bxi}\iota_{\bxi}\bom^\dag]\right\}.
\end{equation}
One can easily prove the following useful identity on differential forms:
\begin{multline*}
-d_{\bom}\iota_{\bxi}L_{\bxi}^{\bom}\iota_{\bxi}  - \iota_{\bxi}\iota_{\bxi}d_{\bom}\iota_{\bxi}d_{\bom} + \iota_{\bxi}d_{\bom}\iota_{\bxi}\iota_{\bxi}d_{\bom} +[\iota_{\bxi}\iota_{\bxi}F_{\bom},\iota_{\bxi}] + [\iota_{\bxi}F_{\bom},\iota_{\bxi}\iota_{\bxi}] = \\
=-d\iota_{\bxi}L_{{\bxi}}\iota_{\bxi} - \iota_{\bxi}\iota_{\bxi}d\iota_{\bxi}d + 2\iota_{\bxi} d\iota_{\bxi}\iota_{\bxi} d = \frac12\iota_{\left[[\bxi,\bxi],\bxi\right]}=0,
\end{multline*}
the application of which to the top-form $\bom^\dag$ yields:
\[
-d_{\bom}\iota_{\bxi}L_{\bxi}^{\bom}\iota_{\bxi}\bom^\dag +[\iota_{\bxi}\iota_{\bxi}F_{\bom},\iota_{\bxi}\bom^\dag] + [\iota_{\bxi}F_{\bom},\iota_{\bxi}\iota_{\bxi}\bom^\dag] \equiv 0
\]
Considering now that $F_{\bom}\wedge\iota_{\bxi}\bom^\dag$ is a 4-form and therefore vanishes on the boundary, using all of the above we conclude
\[
0=\iota_{\bxi}\iota_{\bxi}[F_{\bom},\iota_{\bxi}\bom^\dag] = [\iota_{\bxi}\iota_{\bxi}F_{\bom},\iota_{\bxi}\bom^\dag] + 2[\iota_{\bxi}F_{\bom},\iota_{\bxi}\iota_{\bxi}\bom^\dag] + [F_{\bom},\iota_{\bxi}\iota_{\bxi}\iota_{\bxi}\bom^\dag] \equiv Q^\partial(Q^\partial\be).
\]

\begin{remark}
The BFV-manifold obtained in Theorem \ref{Theo:PCHpres2} is the resolution of the coisotropic submanifold $C_{BP}$ of Theorem \ref{constraintstheorem}, defined by the equations (compare with \cite{CS1,CS4}):

\begin{align}
C_e\colon Q^\partial \bom^\dag\big|_{\mathrm{gh}=0} \equiv d_{\bom} e\wedge e = 0\\
C_{{\bom}}\colon Q^\partial \bedl\big|_{\mathrm{gh}=0} \equiv F_{\bom}\wedge e\wedge e = 0 
\end{align}

As a matter of fact, observe that the degree 0 part of the action of $Q^\partial$ on $\be$ coincides with the Hamiltonian vector field of the constraint $\mathbf{J}_{\iota_\xi\bem}$, namely (cf. with Eq. \eqref{hamiltonBP})
$$\bem\wedge(\mathsf{J}_{\iota_\xi\bem})_\bem\equiv (\mathsf{J}_{\iota_\xi\bem})_{\bem\wedge\bem} = \frac12d_{\bom}\iota_\xi(\bem\wedge\bem) = Q^\partial\be\big|_{gh=0}$$
and similarly for the action on $\bom$. However, this is inequivalent to the Einstein--Hilbert phase space for, as we mentioned, the Hamiltonian constraint is not taken into account.
\end{remark}

\end{document}